\newcommand{\eclause}{{\unitlength 2.3mm \,\framebox(1,1){}\,}}
\newcommand{\ignore}[1]{}
\newtheorem{theorem}{Theorem}
\newtheorem{corollary}{Corollary}
\newtheorem{definition}{Definition}
\newtheorem{lemma}{Lemma}
\newenvironment{proof}{\medskip\par\noindent{\sc Proof:}}{\hfill\rule{2mm}{2mm
}\medskip\par}
\newcommand{\E}{\mathrm{E}}
\begin{document}
\begin{frontmatter}

\title{Scale-Free Random SAT Instances%
  \tnoteref{thanks}}
\tnotetext[thanks]{Research  partially supported by the EU H2020 Research and Innovation Program under the LOGISTAR project (Grant Agreement No. 769142) and the MINECO-FEDER projects RASO (TIN2015-71799-C2-1-P) and TASSAT~3 (TIN2016-76573-C2-2-P).}

\author{Carlos Ans\'otegui}
\ead{carlos@diei.udl.cat}
\address{DIEI, UdL, Jaume II 69,  Lleida, Spain}
\author{Maria Luisa Bonet}
\ead[url]{http://www.lsi.upc.edu/~bonet}\ead{bonet@lsi.upc.edu}
\address{LSI, UPC, J.~Girona 1--3,  Barcelona, Spain}
\author{Jordi Levy}
\ead[url]{http://www.iiia.csic.es/~levy}
\ead{levy@iiia.csic.es}
\address{IIIA-CSIC, Campus UAB,  Bellaterra, Spain}

\begin{abstract}
We focus on the random generation of SAT instances that have properties similar to real-world instances. It is known that many industrial instances, even with a great number of variables, can be solved by a clever solver in a reasonable amount of time. This is not possible, in general, with classical randomly generated instances. We provide a different generation model of SAT instances, called \emph{scale-free random SAT instances}. It is based on the use of a non-uniform probability distribution $P(i)\sim i^{-\beta}$ to select variable $i$, where $\beta$ is a parameter of the model. This results into formulas where the number of occurrences $k$ of variables follows a power-law distribution $P(k)\sim k^{-\delta}$ where $\delta = 1 + 1/\beta$. This property has been observed in most real-world SAT instances. For $\beta=0$, our model extends classical random SAT instances.

We prove the existence of a SAT-UNSAT phase transition phenomenon for scale-free random 2-SAT instances with $\beta<1/2$ when the clause/variable ratio is $m/n=\frac{1-2\beta}{(1-\beta)^2}$.  We also prove that scale-free random k-SAT instances are unsatisfiable with high probability when the number of clauses exceeds $\omega(n^{(1-\beta)k})$. 
The proof of this result suggests that, when $\beta>1-1/k$, the unsatisfiability of most formulas may be due to small cores of clauses.
Finally, we show how this model will allow us to generate random instances similar to industrial instances, of interest for testing purposes.
\end{abstract}
\end{frontmatter}

\section{Introduction}

Over the last 20 years, SAT solvers have experienced a great
improvement in their efficiency when solving practical SAT
problems. This is the result of some techniques like conflict-driven
clause learning (CDCL), restarting and clause deletion policies. The
success of SAT solvers is surprising, if we take into account that SAT
is an NP-complete problem, and in fact, a big percentage of formulas
need exponential size resolution proofs to be shown
unsatisfiable. This has led some researchers to study what is the
nature of real-world or industrial SAT instances that make them easy
in practice. Parallelly, most theoretical work on SAT has focused on uniform randomly selected
instances. Nevertheless, nowadays we know that most industrial
instances share some properties that are not present in most (uniform
randomly-choosen) SAT formulas. It is also well-known that solvers
that perform well on industrial instances, do not perform well on
random instances, and vice versa. Therefore, a new
theoretical paradigm that describes the distribution of industrial
instances is needed. Not surprisingly, generating random instances that are more
similar to real-world instances is described as one of the ten grand
challenges in
satisfiability~\cite{tenchallenges1,tenchallenges2,tenchallenges3,tenchallenges4}.

Over the last 10 years, the analysis of the industrial SAT instances
used in SAT solvers competitions has allowed us to have a clear image
of the structure of real-world instances. \citet{backdoors} proved
that industrial instances contain a small number of variables (called
the \emph{backdoor} of the formula) that, when instantiated, make the
formula easy to solve. \citet{AAAI08} showed that industrial instances
have a smaller tree-like resolution space complexity (also called
\emph{hardness}~\cite{kullmann}) than randomly generated instances with the same
number of variables. \citet{CP09} proved that most industrial
instances, when represented as a graph, have a scale-free
structure. This kind of structure has also been observed in other
real-world networks like the World Wide Web, Internet, some social
networks like papers co-authorship or citation, protein interaction
network, etc. \citet{SAT12,JAIR19} show that these graph representations of
industrial instances exhibit a very high modularity. Modularity has
been shown to be correlated with the runtime of CDCL SAT
solvers~\cite{NewshamGFAS14}, and has been used to improved the
performance of some solvers~\cite{SAT15,SonobeKI14,DMartinsML13}. It
is also known that these graph representations are
self-similar~\cite{IJCAR14} and eigen-vector centrality is correlated
with the significance of variables~\cite{pagerank}.

Defining a model that captures all these properties observed in
industrial instances is a hard task. Here, we focus on the
scale-free structure. We will define a model and propose a
random generator for \emph{scale-free SAT formulas}, extending our work
presented in CCIA'07~\cite{CCIA07}, CCIA'08~\cite{CCIA08}, 
IJCAI'09~\cite{IJCAI09} and CP'09~\cite{CP09}. This model is parametric in the
size $k$ of clauses and an exponent $\beta$. Formulas are sets of $m$ independently sampled clauses of size $k$ with possible repetitions. Clauses are sets of $k$ independently sampled variables, without repetitions, where each variable $x_i$ is chosen with probability $P(x_i) \sim i^{-\beta}$, and negated with probability $1/2$. 
In this paper, we also study the SAT-UNSAT phase 
transition phenomena in this new model using percolation techniques of 
statistical mechanics. We prove that random scale-free formula over $n$ variables, exponent $\beta$ and $\mathcal{O}(n^{(1 - \beta)k})$ clauses of size $k$ are unsatisfiable with high probability (see Theorem~\ref{thm-exponent}). This means that, for big enough values of $\beta$, the number of clauses needed to make a formula unsatisfiable is sub-linear on the number of variables, contrarily to the standard random SAT model. We also prove that scale-free random 2-SAT formulas with exponent $\beta<1/2$ and a ratio of clause/variables $m/n > \frac{1-2\beta}{(1-\beta)^2}$ are also unsatisfiable with high probability (see Theorem~\ref{thm-scale-free-2-SAT}). This last result, together with a coincident lower bound found by Friedrich et al.~\cite{tobiasAAAI17} allow us to conclude that scale-free random 2SAT formulas show a SAT-UNSAT phase transition threshold.

During the revision of this article, many new results related to the phase transition on scale-free random formulas have been found. \citet{tobiasESA17} generalize the notion of scale-free random k-SAT formulas and prove that there exists an asymptotic satisfiability threshold (in the sense of \cite{friedgut}) for $\beta<1-1/k$, when the number of clauses is linear in the number of variables. \citet{tobiasSAT18} find sufficient conditions for the sharpness of this threshold, generalizing \citet{friedgut}'s result for uniform random formulas. \cite{tobiasICALP19} generalize the notion of scale-free formula to the notion of \emph{non-uniform random formula}, only assuming that variable $x_i$ is selected with probability $p_i$ (where $p_1\geq p_2\geq\dots\geq p_n$) and determine the position of the threshold for $k=2$. \citet{cooper} and \citet{bulatov} analyze the \emph{configuration model} for 2-SAT where, instead of fixing the probability of every variable, they fix the degree of every variable. If these degrees follow a power-law distribution, the location of the satisfiability threshold (for $k=2$) is the same as in our model.

This article proceeds as follows. In Section~\ref{sec-graphs} we
review some methods to generate scale-free random graphs. One of these
methods is the basis of the definition of scale-free random formulas,
introduced in Section~\ref{sec-formulas}. In
Section~\ref{sec-industrial}, we summarize some properties of industrial
or real-world SAT instances, described in detail in our work presented
at CP'09~\cite{CP09}. We prove the existence of a SAT-UNSAT phase
transition phenomenon in scale-free random 2-SAT instances in
Section~\ref{sec-2SAT}. This is done using 
percolation techniques. In Section~\ref{sec-small-cores}, we prove that
when the $\beta$ parameter that regulates the scale-free struture of
formulas exceeds a certain value, the SAT-UNSAT phase transition
phenomena vanishes, and most formulas become unsatisfiable due to
small cores of unsatisfiable clauses.

\section{Generation of Scale-Free Graphs}\label{sec-graphs}

Generating scale-free formulas has an obvious relationship with the generation of scale-free graphs. In this section we review some graph generation methods developed by researchers on complex networks.

A scale-free graph is a graph where node degrees follow a power-law distribution $P(k) \sim k^{-\gamma}$, at least asymptotically, where exponent $\gamma$ is around $3$. Preferential attachment~\cite{Barabasi} has been proposed as the natural process that makes scale-free networks so prevalent in
nature. This process can be used to generate scale-free graphs as
follows. Given two numbers $n$ and $m$, we start at time $t=m+1$ with
a clique of size $m+1$ where all nodes have degree $m$ (in the limit
when $n$ tends to infinity, the starting graph is not relevant).
Then, at every time $t=m+2,\dots, n$, we add a new node (with index
$t$), connected to $m$ distinct and older nodes $s<t$, such that the
probability that a node $s$ gets a connection to this new node $t$ is
proportional to the degree of $s$ at time $t$. This process generates
a scale-free graph with asymptotic exponent $\gamma=3$, average node
degree $\E[k]=2\,m$ and minimum degree $k_i \geq m$, for all nodes. We
can also prove that the expected degree of node $i$ is $\E[k_i]\sim
i^{-\beta}$, for small values of $i$, where $\beta=0.5$.

In order to explain the origin of scale-free networks where $\gamma
\neq 3$, several
models have been proposed~\cite{evolution-networks}. One of these models is based on the
\emph{aging} of nodes~\cite{aging}. This means that the probability of
a node $s$ (created at instant $s$) to get a new edge at instant $t$ is proportional to the product of
its degree and $(t-s)^{-\alpha}$, where $t-s$ is the \emph{age}
of the node. This model generates scale-free graphs
when $\alpha<1$. When $\alpha\to 0$, the exponents of the power-laws
$P(k)\sim k^{-\gamma}$ and $\E[k_i]\sim i^{-\beta}$ are $\gamma = 3 +
4(1-\ln 2)\alpha$ and $\beta = 1/2 -(1-\ln 2)\alpha$,
respectively. Therefore, the value of $\alpha$ may be used to tune the
values of $\gamma$ and $\beta$.

In the previous methods, \emph{growth} in the number of nodes is essential.  There are other methods, usually called \emph{static}, where the number of nodes is fixed from the beginning and during the process we only add edges.

The simplest method, assuming uniform probability for all graphs with a scale-free degree distribution in the degree of nodes, is the \emph{configuration method}, that can be implemented as follows.

Given a desired number of nodes $n$ and exponent $\gamma$, for every node $i\in\{1,\dots,n\}$, generate a degree $k_i$ following the probability $P(k)=k^{-\gamma}/\zeta(\gamma)$, independently of $i$.  Here, $\zeta(x)=\sum_{i=1}^\infty i^{-x}$ is the Riemann zeta function. Then, generate a graph with these node degrees, ensuring that all them are generated with the same probability. This can be done, for instance, with a unfold-fold process: In the unfolding, we replicate node $i$, with degree $k_i$, into $k_i$ new nodes with degree $1$.  Then, we randomly generate a graph where all nodes have degree equal to one, ensuring that all 1-regular graphs with $\sum_{i=1}^n k_i$ nodes are generated with the same probability. Then, in the folding, we merge the $k_i$ nodes that came from the replication of $i$, into the same node. When there is an edge between two nodes and these two nodes are merged, a self-loop is created. Similarly, when we have two edges $i_1\leftrightarrow j$ and $i_2\leftrightarrow j$ and $i_1$ and $i_2$ are merged, a duplicated edge is created. Therefore, we reject the resulting graph, if it contains self-loops or multiple edges between the same pair of nodes.  Alternatively, we can also apply the Erd\"os-R\'enyi generation method to the unfolded set of nodes, with average node degree equal to one. In this later case, we would ensure that after folding, node $i$ has a degree close to $k_i$, since in the Erd\"os-R\'enyi model, node degrees follow a binomial distribution (a Poisson distribution $P(k)=e^{-z}\frac{z^k}{k!}$ in the infinite limit, where $z$ is the average degree, $z=1$ in our case).

The previous method has two problems. First, the resulting graph (after the unfold-fold process) will have average node degree equals to:
\[
\E[k] = \sum_{k=1}^\infty k\,P(k) =\sum_{k=1}^\infty k\, \frac{k^{-\gamma}}{\zeta(\gamma)}
= \frac{\zeta(\gamma-1)}{\zeta(\gamma)}
\]

If we want to obtain a graph with a distinct average degree, we have to modify the probability $P(k)$ for small values of $k$, and ensure that $P(k)$ follows a power-law distribution only asymptotically for big values of $k$. In other words, we only require $P(k)$ to follow a \emph{heavy-tail} distribution. Second, a great fraction of generated graphs will contain self-loops or multiple-edges after folding. This means that a great fraction of graphs will be rejected, which makes the method inefficient. However, the model can be useful to translate some properties of the Erd\"os-R\'enyi model to scale-free graphs via the unfolding-folding process and the \emph{configuration model}~\cite{Bender&Canfield,Bollobas}.

The unfolding-folding procedure was described by~\citet{regularpowerlaw}.  They, instead of assigning a random degree to each node, describe a model where, given two parameters $\alpha$ and $\gamma$,\footnote{In the original paper, authors use the name $\beta$ instead of $\gamma$.} we choose a random graph (with uniform probability, and allowing self-cycles) among all graphs satisfying that the number of nodes with degree $x$ is $e^\alpha/x^\gamma$. When $\gamma>2$, the average node degree in this model is also $\frac{\zeta(\gamma-1)}{\zeta(\gamma)}$.

Alternatively, instead of fixing the degree of every node, we can fix the \emph{expected degree} of every node $\E[k_i]= w_i$. In order to construct a graph where nodes have this expected degree $\E[k_i]\sim w_i$, we only need to generate edge $i\leftrightarrow j$ with probability $P(i\leftrightarrow j) \sim w_i\, w_j$. If we want to generate a scale-free graph where $P(k) \sim k^{-\delta}$, for sparse graphs, it suffices to fix $w_i= i^{1/(\delta-1)}$~\cite{model-chung-lu,universal-load-distribution} (see also Theorem~\ref{thm-exponents-relation}).

Our scale-free formula generation method is based on this static scale-free graph generation model with fixed expected node degrees. Basically, nodes are replaced by variables. Then, instead of edges, we generate hyper-edges. Negating every variable connected by a hyper-edge with probability $1/2$, we get clauses.

\section{Scale-Free Random Formulas}\label{sec-formulas}

In this section we describe the \emph{scale-free random SAT formulas} model.

We consider $k$-SAT formulas over $n$ variables, denoted by $x_1,\dots,x_n$. A formula is a conjunction of $m$ possibly repeated clauses, represented as a multiset. Clauses are disjunctions of $k$ literals, noted $l_1\vee\dots\vee l_k$, where every literal may be a variable $x_i$ or its negation $\neg x_i$. We identify $\neg\neg x$ with $x$. We restrict clauses to not contain repeated occurrences of variables. This avoids simplifiable formulas like $x\vee x\vee y$ and tautologies like $x\vee \neg x\vee y$.
In general, we represent every variable by its index, and negation as a minus, writing $i$ instead of $x_i$, and $-i$ instead of $\neg x_i$. In other words, a variable $x$ is a number in $\{1,\dots,n\}$, and a literal a number in $\{-n,\dots,n\}$ distinct from zero. 
We use the notation $\pm x$ to denote either $x$ or $\neg x$. 
The number of occurrences of literal $l$ in a formula is denoted by $k_l$, and $K_x = k_x+k_{\neg x}$ denotes the number of occurrences of variable $x$.  
The size of a formula $F$ is $|F|=m\,k$.

In the following, we will use the notation $P(x)\sim f(x)$ to indicate that random variable $x$ follows the probability distribution $f(x)$. The notation $f(n)\approx g(n)$ indicates that $\lim_{n\to \infty} \frac{f(x)}{g(x)} = 1$.\footnote{Notice that $f(n)\approx g(n)$ is equivalent to $f(n)=g(n)(1+o(1))$. When $g(n)=\Theta(1)$, then $f(n)=g(n)(1+o(1))$ is equivalent to $f(n)=g(n)+o(1)$.}

\begin{definition}[Scale-free Random Formula]
In the scale-free model, given $n$, $m$ and $\beta$, to construct a random formula, we generate
$m$ clauses independently at random from the set of $2^k\,{n\choose
  k}$ clauses, sampling every valid clause with probability
\[
P(l_1\vee\dots\vee l_k) \sim \prod_{i=1}^k P(l_i)
\]
where every literal $l_i$ is sampled with probability
\[
P(x) = P(\neg x) \sim x^{-\beta}
\]
\end{definition}

In practice, we generate a variable $x$ with probability $P(x)=x^{-\beta}/\sum_{i=1}^n i^{-\beta}$, negate it with probability $1/2$, repeat the process $k$ times, and reject clauses containing repeated variables.

Therefore, the probability of a clause satisfies the inequality
\[
P(l_1\vee\dots\vee l_k) \geq
\frac
    {k!\,\prod_{i=1}^k |l_i|^{-\beta}}
    {(2\,\sum_{i=1}^ni^{-\beta})^k}
    \]
    
\subsection{Some Properties of the Model}

In the case of the graph generator, we reject self-loops and repeated edges between two nodes. This makes distribution of degrees to follow a power-law, only asymptotically and for sparse graphs.  In our case, we reject clauses with repeated variables. This is the reason that invalidates the reverse direction in the previous inequality. It also makes formulas to follow a power-law distribution in the number of variable occurrences only asymptotically (see Theorem~\ref{thm-exponents-relation}). In the following we will discuss when the approximation for $P(l_1\vee\dots\vee l_k)\approx\frac{k!\,\prod_{i=1}^k |l_i|^{-\beta}}{(2\,\sum_{i=1}^ni^{-\beta})^k}$ is tight (Lemma~\ref{lem-surname}).

Notice that $\sum_{i=1}^n i^{-\beta} = H_{n,\beta}$ are the \emph{generalized harmonic numbers}. When $n$ tends to infinity and $\beta\neq 1$, using the Euler-Maclaurin formula, they can be approximated as
\begin{equation}
\sum_{i=1}^n i^{-\beta} = \zeta(\beta) +\frac{1}{1-\beta}n^{1-\beta} + \frac{1}{2}n^{-\beta}+\mathcal{O}(n^{-\beta-1})
\label{eq-euler}
\end{equation}
where $\zeta(\beta)$ is the Riemann zeta function.
When $\beta=1$, we have 
\begin{equation}
\sum_{i=1}^n i^{-1}=\gamma+\log n+\mathcal{O}(n^{-1})
\label{eq-euler2}
\end{equation}
where $\gamma$ is the Euler constant. 

This means that, when $n$ tends to infinity, the probability of sampling variable $x_i$ is $P(x_i)=o(1)$, when $0\leq \beta\leq 1$, and $P(x_i) = i^{-\beta}/\zeta(\beta)+o(1)$, when $\beta>1$. The fact that the probability of sampling a variable does not vanish, when the number of variables tend to infinity and $\beta>1$, may be troublesome. In particular, the probability of generating clauses with duplicated variables does not vanish, even for constant clause sizes. Similarly, to avoid duplicated variables, we also have to impose an upper bound $k=o(n^{\min\{1/2,1-\beta\}})$.

\begin{lemma}\label{lem-surname}
When $0\leq \beta<1$, the sizes of clauses are $k=o(n^{\min\{1/2,1-\beta\}})$ and $n$ tends to infinity, the probability of generating a clause with a duplicated variable tends to zero.

In these conditions, the probability of a random variable and the probability of a random clause in a formula are
$$
P(x=x_i) \approx \frac{i^{-\beta}}{\sum_{j=1}^n j^{-\beta}}
$$
$$
P(C=l_1\vee\dots\vee l_k) \approx
\frac
    {k!\,\prod_{i=1}^k |l_i|^{-\beta}}
    {(2\,\sum_{i=1}^ni^{-\beta})^k}
$$
\end{lemma}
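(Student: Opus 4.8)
The plan is to bound the probability that a randomly sampled clause contains a repeated variable, show this is $o(1)$ under the stated hypotheses, and then observe that conditioning on "no repeats" changes the clause distribution only by a $(1+o(1))$ factor. First I would note that, ignoring the no-repeats restriction, the $k$ variables of a clause are sampled i.i.d.\ with $P(x=x_i) = i^{-\beta}/H_{n,\beta}$, where $H_{n,\beta}=\sum_{j=1}^n j^{-\beta}$. Let $A$ be the event that the $k$ sampled variables (before rejection) are pairwise distinct. By a union bound over the $\binom{k}{2}$ pairs,
\[
P(\neg A) \leq \binom{k}{2}\sum_{i=1}^n P(x=x_i)^2
= \binom{k}{2}\,\frac{\sum_{i=1}^n i^{-2\beta}}{H_{n,\beta}^2}
= \binom{k}{2}\,\frac{H_{n,2\beta}}{H_{n,\beta}^2}.
\]
Now I would plug in the Euler--Maclaurin asymptotics \eqref{eq-euler}: for $0\le\beta<1$ we have $H_{n,\beta}=\Theta(n^{1-\beta})$, while $H_{n,2\beta}=\Theta(n^{1-2\beta})$ if $\beta<1/2$, $H_{n,2\beta}=\Theta(\log n)$ if $\beta=1/2$, and $H_{n,2\beta}=\Theta(1)$ if $\beta>1/2$. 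In all three cases $H_{n,2\beta}/H_{n,\beta}^2 = \Theta(n^{-2(1-\beta)})$ when $\beta<1/2$ (more precisely $\Theta(n^{1-2\beta}/n^{2-2\beta}) = \Theta(n^{-1})$), $\Theta(\log n\,/\,n)$ when $\beta=1/2$, and $\Theta(n^{-2(1-\beta)})$ when $\beta>1/2$; in every case it is $n^{-\Omega(1)}$ up to logs. Multiplying by $\binom{k}{2}=O(k^2)$ and using $k=o(n^{\min\{1/2,1-\beta\}})$ shows $P(\neg A)=o(1)$, which proves the first assertion. (The exponent $\min\{1/2,1-\beta\}$ is exactly what is needed: the $1/2$ controls the $\beta<1/2$ regime where the bound is $\Theta(k^2/n)$, and the $1-\beta$ controls the $\beta\ge 1/2$ regime where the bound is $\Theta(k^2 n^{-2(1-\beta)})$, both times possibly a harmless $\log n$.)

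For the two approximations, I would argue as follows. The marginal law of the first sampled variable is exactly $i^{-\beta}/H_{n,\beta}$ before rejection; after rejecting clauses with repeats, the law of a variable occurrence is this marginal conditioned on $A$, which differs from the unconditional one by a multiplicative factor in $[P(A),\,1/P(A)]=[1-o(1),\,1+o(1)]$, hence $P(x=x_i)\approx i^{-\beta}/H_{n,\beta}$. For the clause probability: a fixed valid clause $C=l_1\vee\dots\vee l_k$ (no repeated variables, literals in some order) is produced, before rejection, by exactly $k!$ of the $k$-tuples of literal-samples — one for each ordering of its $k$ distinct literals — each of probability $\prod_{i=1}^k \big(|l_i|^{-\beta}/(2H_{n,\beta})\big)$, the factor $2$ accounting for the independent sign flip. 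Thus $P(C \mid \text{sample is valid}) = \dfrac{k!\,\prod_{i=1}^k |l_i|^{-\beta}/(2H_{n,\beta})^k}{P(A)}$, and dividing by $P(A)=1-o(1)$ gives the stated $\approx$.

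The main obstacle is purely bookkeeping in the first step: getting a clean, regime-uniform bound on $H_{n,2\beta}/H_{n,\beta}^2$ and verifying that the threshold $k=o(n^{\min\{1/2,1-\beta\}})$ is precisely what forces $\binom{k}{2}H_{n,2\beta}/H_{n,\beta}^2\to 0$ across the cases $\beta<1/2$, $\beta=1/2$, $\beta>1/2$. One must be slightly careful at $\beta=1/2$, where a $\log n$ appears and one needs $k=o(\sqrt{n})$ to be interpreted as $k=o(\sqrt{n/\log n})\cdot$(something) — but since $\min\{1/2,1-\beta\}=1/2$ there and $k^2\log n/n = o(1)$ already follows from $k=o(\sqrt{n/\log n})$, and in fact from $k = o(\sqrt n)$ with a whisker of room, the stated hypothesis suffices. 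Everything after the union bound is routine: the conditioning argument is a one-line observation once $P(A)=1-o(1)$ is in hand.
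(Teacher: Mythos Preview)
Your argument is correct and takes a genuinely simpler route than the paper's. The paper invokes the \emph{surname problem} recurrence of Nakata: it sets $P_j=\sum_i p_i^{\,j}$, writes the coincidence probability $R_k$ via the recursion $R_k=R_{k-1}+\sum_{j=2}^k(-1)^j\frac{(k-1)!}{(k-j)!}P_j(1-R_{k-j})$, bounds $R_k\le k\max_{2\le j\le k}k^{j-1}P_j$, estimates each $P_j$ from the Euler--Maclaurin expansion, and then checks that the maximum over $j$ is attained at $j=2$. You bypass all of this with the pairwise union bound $P(\neg A)\le\binom{k}{2}P_2$, which is exactly the $j=2$ term the paper ends up isolating anyway. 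Your approach is more elementary and loses nothing: the higher-$j$ terms the paper tracks are always dominated. The paper's route, on the other hand, makes explicit why only $P_2$ matters and would generalize more readily if one wanted sharper error terms.

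Two small points. First, your sentence ``the law\ldots differs from the unconditional one by a multiplicative factor in $[P(A),1/P(A)]$'' is not quite the right bracket: the factor is $P(A\mid x_1=i)/P(A)$, and the lower bound needs a one-line uniform estimate $P(\neg A\mid x_1=i)\le (k-1)\max_j p_j+\binom{k-1}{2}P_2=o(1)$, which your hypotheses give immediately. Second, your hand-wave at $\beta=1/2$ (``from $k=o(\sqrt n)$ with a whisker of room'') is not actually valid: $k=o(\sqrt n)$ does \emph{not} force $k^2\log n/n\to 0$ (take $k=\sqrt n/\log\log n$). The paper finesses this by silently working with $k=\mathcal{O}(n^\alpha)$ for $\alpha$ strictly below $\min\{1/2,1-\beta\}$, which leaves polynomial room to absorb the logarithm; so this boundary case is a shared soft spot in the lemma's statement rather than a defect specific to your proof.
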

\begin{proof}
We will use a result known as \emph{surname problem}~\cite{surnameParadox}, that generalizes the \emph{birthday paradox}. Let $X_1,\dots,X_k$ be independent random variables which have an identical discrete distribution $P(X=i)= p_i$, for $i\geq 1$. Let $R_k$ be the coincidence probability that at least two $X_j$ have the same value. Let $r_k=1-R_k$ be the non-coincidence probability. Then, $r_k$ may be computed using the recurrence $r_0=1$ and
$$
r_k = \sum_{j=1}^k (-1)^{j-1} \frac{(k-1)!}{(k-j)!}\,P_j\,r_{k-j}
$$
where $P_k=\sum_{i\geq 1} (p_i)^k$.
The coincidence probability can be computed as $R_1=0$ and
$$
R_k = R_{k-1} + \sum_{j=2}^k (-1)^j\frac{(k-1)!}{(k-j)!}\,P_j(1-R_{k-j})
$$

In our case, we face the problem of choosing $k$ independent variables, and we want to compute the probability of getting a duplicated variable, hence a rejected clause. When $\beta<1$, we have:
$$
P_k = \frac{\sum_{i\geq 1}i^{-\beta k}}{(\sum_{i\geq1}i^{-\beta})^k}=
\frac{\frac{1}{1-\beta k}n^{1-\beta k}+\zeta(\beta k)+\mathcal{O}(n^{-\beta k})}
{(\frac{1}{1-\beta}n^{1-\beta}+ \mathcal{O}(1))^k} = 
\frac{(1-\beta)^k}{1-\beta k}n^{1-k} +\zeta(\beta k)(1-\beta)^k n^{-(1-\beta)k} + \mathcal{O}(n^{-k}) 
$$
Depending on whether $\beta\, k$ is greater or smaller than $1$, the first or the second term of $P_k$ will dominate.

Since $\frac{(k-1)!}{(k-j)!} < k^{j-1}$ and $R_{k-j}\geq 0$, we have
$$
R_k \leq \sum_{i=2}^k\sum_{j=2}^i i^{j-1}\,P_j\leq k\max_{j=2,\dots,k}k^{j-1}P_j
$$
In our case, assuming $k=\mathcal{O}(n^\alpha)$, and replacing the value of $P_j$, we get
$$
R_k \leq \mathcal{O}(n^\alpha) \max_{j=2,\dots k}\mathcal{O}(n^{\alpha(j-1)} (n^{1-j}+n^{-(1-\beta)j})) = 
\max_{j=2,\dots k}\mathcal{O}(n^{1-(1-\alpha)j} +n^{-(1-\beta-\alpha)j}))
$$
Assuming $\alpha \leq 1-\beta < 1$, the maximum is obtained for $j=2$. In this situation
$
R_k \leq \mathcal{O}(n^{-(1-2\alpha)}+n^{-1(1-\beta-\alpha)})
$.
Therefore, it suffices to assume that $\alpha <\min\{1/2,1-\beta\}$ to ensure that $R_k=o(1)$.
\end{proof}


\begin{lemma}\label{lem-expectedK}
In a scale-free random formula over $n$ variables and $m=C\,n$ clauses of size $k=\mathcal{O}(1)$, generated with exponent $0<\beta<1$, the expected number of occurrences of variable $x_i$ is
\[
\E[K_i] \approx C\,k\,(1-\beta)\left(\frac i n\right)^{-\beta}
\]
\end{lemma}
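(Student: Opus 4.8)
The plan is to reduce $\E[K_i]$ to the probability that a single random clause contains the variable $x_i$, and then substitute the Euler--Maclaurin estimate~(\ref{eq-euler}) for the generalized harmonic numbers. First I would write $K_i=\sum_{j=1}^m Y_{i,j}$, where $Y_{i,j}$ is the number of occurrences of $x_i$ (with either sign) in the $j$-th clause. Since the clauses $C_1,\dots,C_m$ are sampled independently and a valid clause never repeats a variable, each $Y_{i,j}\in\{0,1\}$ and $\E[Y_{i,j}]=P(x_i\in C)$ for a single random valid clause $C$; hence $\E[K_i]=m\,P(x_i\in C)$ by linearity of expectation.

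Next I would estimate $P(x_i\in C)$. Writing $p_i=i^{-\beta}/H_{n,\beta}$ and noting that $p_i\le 1/H_{n,\beta}=\Theta(n^{-(1-\beta)})=o(1)$ while $k=\mathcal{O}(1)$, an unconstrained generation attempt places $x_i$ in at least one of its $k$ positions with probability $1-(1-p_i)^k=k\,p_i\,(1+o(1))$, uniformly in $i$. To move from the unconstrained attempt to a rejection-sampled valid clause, I would invoke Lemma~\ref{lem-surname}: the rejection probability $R_k$ is $o(1)$, so $P(C\text{ valid})=1-o(1)$, and the probability that $x_i$ occurs in an \emph{invalid} attempt is $o(k\,p_i)$, by splitting on whether the collision involves $x_i$ itself (an extra factor $\mathcal{O}(k\,p_i)=o(1)$) or involves two variables other than $x_i$ while $x_i$ occurs once (an extra factor $\mathcal{O}(k^2 P_2)=o(1)$, where $P_2=\sum_j p_j^2=o(1)$ is bounded exactly as in the proof of Lemma~\ref{lem-surname}). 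Together with $P(x_i\in C)=\bigl(P(x_i\in C')-P(x_i\in C',\text{invalid})\bigr)/P(C'\text{ valid})$, this gives $P(x_i\in C)=k\,p_i\,(1+o(1))$, uniformly in $i$.

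Finally, combining the two steps, $\E[K_i]=m\,k\,p_i\,(1+o(1))=C\,n\,k\,i^{-\beta}/H_{n,\beta}\,(1+o(1))$, and by~(\ref{eq-euler}) we have $H_{n,\beta}=\frac{1}{1-\beta}n^{1-\beta}(1+o(1))$ for $0<\beta<1$, so
\[
\E[K_i]=C\,k\,(1-\beta)\,n^{\beta}\,i^{-\beta}\,(1+o(1))=C\,k\,(1-\beta)\left(\frac{i}{n}\right)^{-\beta}(1+o(1)),
\]
which is exactly the claimed approximation.

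The main obstacle is the second step: making the passage from unconstrained sampling to the rejection-sampled (valid) clause both rigorous and uniform in $i$, since the approximation $\approx$ must hold for all $i$ from $1$ (where $\E[K_i]$ is of order $n^{\beta}$) down to $i=n$. Everything else is linearity of expectation plus the harmonic-number asymptotics already recorded in~(\ref{eq-euler}). In particular I would organize the bound on the event that $x_i$ occurs in an invalid attempt along the same lines as the proof of Lemma~\ref{lem-surname}, conditioning on which variables participate in the collision, so that the estimates for $P_j=\sum_i p_i^j$ derived there can be reused directly.
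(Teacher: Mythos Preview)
Your proposal is correct and follows essentially the same approach as the paper: the paper's proof is the one-line computation $\E[K_i]=P(i)\,|F|\approx \dfrac{i^{-\beta}}{H_{n,\beta}}\,C\,k\,n\approx C\,k\,(1-\beta)(i/n)^{-\beta}$, invoking Lemma~\ref{lem-surname} and equation~(\ref{eq-euler}) exactly as you do. Your clause-by-clause decomposition and explicit bound on the rejection event are just a more detailed unpacking of the same step that the paper absorbs into the citation of Lemma~\ref{lem-surname}.
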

\begin{proof}
By Lemma~\ref{lem-surname} and equation~(\ref{eq-euler}), since $0<\beta<1$ we have
\[
\E[K_i]= P(i)\,|F| \approx \frac{i^{-\beta}}{\zeta(\beta)+\frac{1}{1-\beta}n^{1-\beta}+\mathcal{O}(n^{-\beta})}\,C\,k\,n \approx C\,k\,(1-\beta)\left(\frac i n\right)^{-\beta}
\]
\end{proof}

The following theorem ensures that the formulas we get are scale-free,
in the sense that the number of occurrences of variables follow a
power-law distribution $P(K)\sim K^{-\delta}$, for big enough values of $K$.

\begin{theorem}\label{thm-exponents-relation}
  In scale-free random formulas over $n$ variables, with $m=C\,n$ clauses of size $k$, and generated with exponent $0<\beta<1$, when $n$ tends to $\infty$ being $C$ and $k$ constants, the probability that a variable has $K$ occurrences, where $K=\Omega(\sqrt{n\log n})$ or $K=\Omega\left( (n^2\log n)^{\frac\beta{2+\beta}}\right)$, follows a power-law distribution $P(K) \sim K^{-\delta}$, where $\delta = 1/\beta + 1$.
\end{theorem}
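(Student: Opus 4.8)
The plan is to compute the distribution of $K_i$ for a fixed variable $x_i$, show it concentrates around its mean $\E[K_i]$ given by Lemma~\ref{lem-expectedK}, and then convert the distribution over $i$ into a distribution over $K$ by a change of variables, checking that the power-law exponent comes out to $\delta = 1 + 1/\beta$. First I would observe that $K_i$ is a sum of $m = Cn$ i.i.d.\ indicator contributions (one per clause), so $K_i$ is essentially $\mathrm{Binomial}(mk, p_i)$ with $p_i = P(x=x_i) \approx i^{-\beta}/\sum_j j^{-\beta}$, up to the negligible clause-rejection correction controlled by Lemma~\ref{lem-surname}. Hence $\E[K_i] \approx C k (1-\beta)(i/n)^{-\beta} =: \mu_i$, and by a standard Chernoff bound $K_i$ is within a $(1\pm o(1))$ factor of $\mu_i$ with probability $1 - o(1/n)$ provided $\mu_i \to \infty$, i.e.\ provided $i = o(n)$; for the relevant range of $K$ this is exactly where the relevant $i$ live.

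Next I would invert the relation $K \approx \mu_i = Ck(1-\beta)(i/n)^{-\beta}$ to get $i \approx i(K) := n\,(Ck(1-\beta))^{1/\beta} K^{-1/\beta}$. A variable has (approximately) $K$ occurrences iff its index is near $i(K)$; more precisely, the set of indices $i$ whose mean $\mu_i$ lies in a unit window around $K$ has size $\bigl|\frac{di}{dK}\bigr| \sim n \cdot K^{-1/\beta - 1}$. Since indices are (essentially) uniform over $\{1,\dots,n\}$, the probability that a randomly chosen variable has $K$ occurrences is proportional to this window size, giving
\[
P(K) \;\sim\; \left|\frac{di}{dK}\right| \;\sim\; K^{-(1/\beta + 1)} \;=\; K^{-\delta},
\]
as claimed. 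The concentration step is what makes this rigorous: because $K_i$ concentrates tightly, ``variable $i$ has $K$ occurrences'' is well-approximated by ``$i \in [i(K+1), i(K)]$'', and the error in this approximation is what forces the lower bounds $K = \Omega(\sqrt{n\log n})$ or $K = \Omega((n^2\log n)^{\beta/(2+\beta)})$ — below these thresholds the binomial fluctuation $\sqrt{\mu_i} \sim \sqrt{K}$ is comparable to the spacing $|di/dK|^{-1}$ between consecutive ``target'' indices, so different values of $K$ are no longer resolvable and the clean power law is smeared out.

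The main obstacle I expect is precisely handling this smearing quantitatively: one must show that for $K$ above the stated thresholds the fluctuations of $K_i$ (of order $\sqrt{K}$) are small compared to the range of $K$-values, $K$ itself, swept out by an interval of indices of width comparable to the fluctuation scale — equivalently, that $\sqrt{K}$ is small compared to the ``resolution'' $K^{1 + 1/\beta}/n$, which rearranges to $K = \Omega(n^{2\beta/(2+\beta)}\cdot\text{polylog})$, matching the second threshold; the first threshold $K = \Omega(\sqrt{n\log n})$ presumably arises from the alternative, cruder requirement that the additive fluctuation $\sqrt{\mu_i \log n}$ in the Chernoff bound be $o(\mu_i)$ uniformly, i.e.\ $\mu_i = \Omega(\log n)$, combined with a union bound over the $n$ variables. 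I would also need to be careful that in the tail regime $\beta k < 1$ versus $\beta k \ge 1$ the rejection probability from Lemma~\ref{lem-surname} and the Euler--Maclaurin error terms in~(\ref{eq-euler}) remain genuinely lower-order; Lemma~\ref{lem-surname}'s hypothesis $k = o(n^{\min\{1/2,1-\beta\}})$, here with $k = \mathcal{O}(1)$, makes this routine.
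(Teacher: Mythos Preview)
Your approach is essentially the paper's: concentration of $K_i$ around $\mu_i=\E[K_i]$, inversion $i\mapsto K$, and the change-of-variables (the paper phrases it via the CDF $F(K)=i/n$ and then differentiates, but that is the same computation as your $|di/dK|$). Your heuristic for the second threshold also matches: the paper applies Chernoff with $\delta\approx\beta/(2i)$ so that consecutive means $\E[K_i]$ and $\E[K_{i+1}]$ are separated, imposes $P(K_i<K_{i+1})=\mathcal{O}(n^{-1})$, and solves for $i$, obtaining $K=\Omega\!\left((n^2\log n)^{\beta/(2+\beta)}\right)$, equivalent to your ``$\sqrt{K}\ll K^{1+1/\beta}/n$''.

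Your guess for the origin of the \emph{first} threshold is wrong, though. The requirement ``Chernoff additive fluctuation $\sqrt{\mu_i\log n}=o(\mu_i)$'' only gives $\mu_i=\omega(\log n)$, i.e.\ $K=\omega(\log n)$, which is far weaker than $K=\Omega(\sqrt{n\log n})$ and does not recover the stated bound. In the paper the first threshold comes from \emph{Hoeffding's} inequality, not Chernoff: since $K_i$ is a sum of $|F|=\Theta(n)$ bounded indicators, taking $\epsilon=\sqrt{(\log n)/n}$ yields $P\bigl(|K_i-\E[K_i]|\geq \sqrt{n\log n}\bigr)\leq 2/n^2$, uniformly in $i$. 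One then fixes $j$ with $\E[K_j]=K-\sqrt{n\log n}$ and bounds $P(k\geq K)\leq j/n+o(1)$; requiring $j/n\approx i/n$ forces the additive slack $\sqrt{n\log n}$ to be $o(K)$, hence $K=\Omega(\sqrt{n\log n})$. The point is that Hoeffding is distribution-free (it depends only on the number of trials, not on $\mu_i$), so it gives a uniform but cruder fluctuation scale; the two thresholds in the statement are genuinely alternative sufficient conditions, with Hoeffding's winning for $\beta$ close to~$1$ and Chernoff's for small~$\beta$.
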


\begin{proof}
In the limit when $n\to\infty$, by Lemma~\ref{lem-surname}, $P(x_i) \approx \mathfrak{C}\, i^{-\beta}$ is the probability of sampling a variable $x_i$, for some constant $\mathfrak{C}=1/\sum_{j=1}^n j^{-\beta}\approx (1-\beta)\,n^{\beta-1}$ that depends on $n$.  
Let $K_i$ be the number of occurrences of variable $i$ in a randomly generated formula $F$.  
We have $\E[K_i] = |F|\,\mathfrak{C}\,i^{-\beta}$.  
Chernoff's or Hoeffding's bounds ensure that, under certain conditions that we will consider later, $K_i$ is approximately $\E[K_i]$.  
Hence, $K_1 > K_2 > \cdots > K_n$ with high probability.

Now we want to approximate the probability $F(K) = \int_{k=K}^\infty P(k)\,\mathrm{d}k$ that a variable occurs at least $K$ times.  Given a value $K$, let $i$ be the index of the variable satisfying $\E[K_i] = K$. 
Under these conditions, all variables with index smaller that $i$ will have more than $K$ occurrences, and those with indexes between $i+1$ and $n$ have less than $K$ occurrences.  
Therefore, $F(K) = i/n$, for the particular $i$ defined above. 
From $\E[K_i] = K$ and $\E[K_i] = |F|\,C\,i^{-\beta}$ we obtain
\[ 
F(K) = \frac{i}{n} = \frac{1}{n}\,\left(\frac{K}{|F|\,\mathfrak{C}}\right)^{-1/\beta}
\]
Then, the probability $P(K)$ is
\[
P(K) = -\frac{\partial}{\partial K} F(K) =
\frac{(|F|\,\mathfrak{C})^{1/\beta}}{\beta\,n}\, K^{-1/\beta -1}
\]
Hence we obtain a discrete power-law distribution with exponent $\delta = 1/\beta +1$.

The problem is that $\E[K_i]$ is a good approximation of $K_i$ only when $i$ is small. For instance, when $i=\Omega(n)$, we have $P(x_i)=\Theta(n^{-1})$ and $\E[x_i] = \Theta(1)$. In this situation, when $n\to\infty$ being $C$ and $k$ constants, the number of occurrences $K_i$ of the variable $x_i$ follows a Poisson distribution with constant variance. This means that, even in the limit $n\to\infty$, we can not assume that $i<j$ implies $K_i > K_j$, when $i=\Omega(n)$.
In the following we will find an upper bound for the index $i$ of the variable (a lower bound for the value of $K$) ensuring that $\E[K_i]$ is a good approximation of $K_i$, when $n\to\infty$. We will use both Hoeffding's and Chernoff's bounds.

In what follows, let be $C$ be the constant such that $|F|\approx C\,n$ is the size of the formula.

Hoeffding's bound states that, if $X=X_1+\dots+X_n$ is the sum of identical and independent Bernoulli variables, then
\[
P(|X-\E[X]|\geq \epsilon\, n) \leq 2\,e^{-2\epsilon^2n}
\]
taking $\epsilon=\sqrt{\frac{\log n}n}$ we obtain
\[
P\left(|X-\E[X]|\geq \sqrt{n\log n}\right) \leq \frac 2{n^2}
\]

Given a value of $K$, let's fix two variables $i$ and $j$ such that
\[
\begin{array}{l}
\E[K_i] = K \\
\E[K_j] = K - \sqrt{n\log n}
\end{array}
\]

We have $P(K_j \geq K) \leq 2/n^2$, and for all variables $r$ with bigger indexes
\[
\sum_{r\geq j} P(K_r\geq K) = o(1)
\]
We have already argued that $F(K)=P(k\geq K) \approx i/n$. Using $j$, we have a strict bound
\[
P(k\geq K) \leq j/n + o(1)
\]

By Lemma~\ref{lem-expectedK}, we get
\[
\begin{array}{rl}
K =  &
\E[K_i] \approx C(1-\beta)\left(i/n\right)^{-\beta} \\
K-\sqrt{n \log n} = &
\E[K_j] \approx C(1-\beta)\left( j/n\right)^{-\beta} =
C(1-\beta)\left(j/i\right)^{-\beta}\left(i/n\right)^{-\beta} \approx 
K\left(j/i\right)^{-\beta}
\end{array}
\]
Therefore 
\[
\begin{array}{l}
j \approx i\left(1-\frac{\sqrt{n\log n}}{K}\right)^{1/\beta} \\
i \approx \left(\frac{C(1-\beta)}{K}\right)^{1/\beta}
\end{array}
\]
Replacing the expressions for $i$ and $j$, we get
\[
P(k\geq K) \leq j/n + o(1) 
\approx 
\left(\frac{C(1-\beta)}K\left(1-\frac{\sqrt{n\log n}}K\right)\right)^{1/\beta}
\]
If $K=\Omega(\sqrt{n\log n})$ then
\[
P(k\geq K) \leq \left(\frac K{C(1-\beta)}\right)^{-1/\beta}+o(1)
\]
Similarly, we can prove the same lower bound $P(k\geq K)\geq \left(\frac K{C(1-\beta)}\right)^{-1/\beta}+o(1)$, using now the variable $j$ such that $\E[K_k]=K+\sqrt{n\log n}$.

Alternatively, we can use the Chernoff's bound
\[
P(|X-\E[X]|\leq \delta \E[X]) \leq 2\,e^{-\frac{\delta^2\E[X]}3}
\]
where $X$ is the sum of independent random variables in the range $[0,1]$.
In order to ensure that the $K_i$'s are sorted, we require that, in the limit $n\to \infty$, we have $P(K_{i}<K_{i+1})=\mathcal{O}(n^{-1})$.
We take the value of $\delta$ that satisfies
\[
\delta \E[K_i] = \frac{\E[K_i]-\E[K_{i+1}]}2
\]
By Lemma~\ref{lem-expectedK} and the Taylor expansion $(1+x)^{a}=1+a\,x+\mathcal{O}(x^2)$ this value of $\delta$, when $i\to\infty$, is
\[
\delta \approx 1/2 - 1/2\left(\frac{i+1}i\right)^{-\beta} 
\approx 
\frac{\beta}{2i}
\]
And, for this value of $\delta$, we impose
\[
2\,e^{-\frac{\delta^2\E[K_i]}3} 
\approx 
2\exp\left({-\frac{(\beta/2i)^2C(1-\beta)(i/n)^{-\beta}}3}\right)=\mathcal {O}(n^{-1})
\]
From this, we get the minimum value of $i$ for which $P(K_{i}<K_{i+1})=\mathcal{O}(n^{-1})$. 
\[
i = \mathcal{O}\left(n^{\beta/(2+\beta)}/\log^{1/(2+\beta)} n\right)
\]

The value of $K=\E[K_i]$ corresponding to this variable $x_i$ gives us a value from which on we can expect to observe the power-law distribution in $P(K)$.
\[
K = \Omega\left((n^2\log n)^{\beta/(2+\beta)}\right)
\]
\end{proof}

\subsection{Implementation of the Generator}

The generation method is formalized in Algorithm~\ref{fig-alg}.
\begin{algorithm}
\KwIn{$n, m, k, \beta$} 
\KwOut{ a $k$-SAT instance with $n$ variables and $m$ clauses} 
$F = \emptyset$\;
\ForEach{$i=1,\dots,m$}{
  \Repeat{$C_i$ does not contain repeated variables}{
    $C_i = \eclause$\;
    \ForEach{$j=1,\dots,k$}{
      v = sampleVariable($\beta$,n)\;
      $C_i = C_i \vee (-1)^{rand(2)}\cdot v$;\\
    }
  }
  $F = F\cup \{C_i\}$\\
}
\caption{Scale-free random $k$-SAT formula generator.}\label{fig-alg}
\end{algorithm}
The function sampleVariable($\beta$,n) may be implemented in two ways.

We can compute a vector $p$ such that $p[i]= \sum_{j=1}^i j^{-\beta}/\sum_{j=1}^n j^{-\beta}$ at the beginning of the algorithm. Then, every time we call sampleVariable, we compute a random number $r$ uniformly distributed in $[0,\,1)$, using a dichotomic search, look for the smallest $i$ such that $p[i] > r$, and return such $i$.

Alternatively, if $n$ is big we can use the following approximated algorithm. If we want to generate numbers $x$ with probability density $f(x)$, we can integrate $F(x)=\int f(x)\,\mathrm{d}x$, find the inverse function, and compute $F^{-1}(y)$, where $y$ is a uniformly random number in $[0,\, 1]$. Our probability function is discrete. However, when $0<\beta<1$, and both $X\to \infty$ and $n\to\infty$, we can approximate it as
\[
P(x\leq X) = \frac{\sum_{i=1}^X i^{-\beta}}{\sum_{i=1}^n i^{-\beta}} \approx
\frac{\zeta(\beta)+1/(1-\beta)\,X^{1-\beta}}{\zeta(\beta)+1/(1-\beta)\,n^{1-\beta}}
\]
Therefore, computing the inverse, sampleVariable may be computed as
\[
X = \left\lfloor \Big(\big(n^{1-\beta} +(1-\beta)\zeta(\beta)\big)\, Y-(1-\beta)\,\zeta(\beta)\Big)^{1/(1-\beta)}\right\rfloor+1
\]
where $Y$ is a uniform random variable in $[0,\, 1)$. This way, avoiding the use of the vector $p$ and the dichotomic search, we save a $\mathcal{O}(\log n)$ factor in the time-complexity and a $\mathcal{O}(n)$ factor in the space-complexity of the generator.

\section{Industrial SAT Instances}\label{sec-industrial}

\begin{figure}
  \hbox to \columnwidth {
    \hss
    \includegraphics[width=0.55\columnwidth]{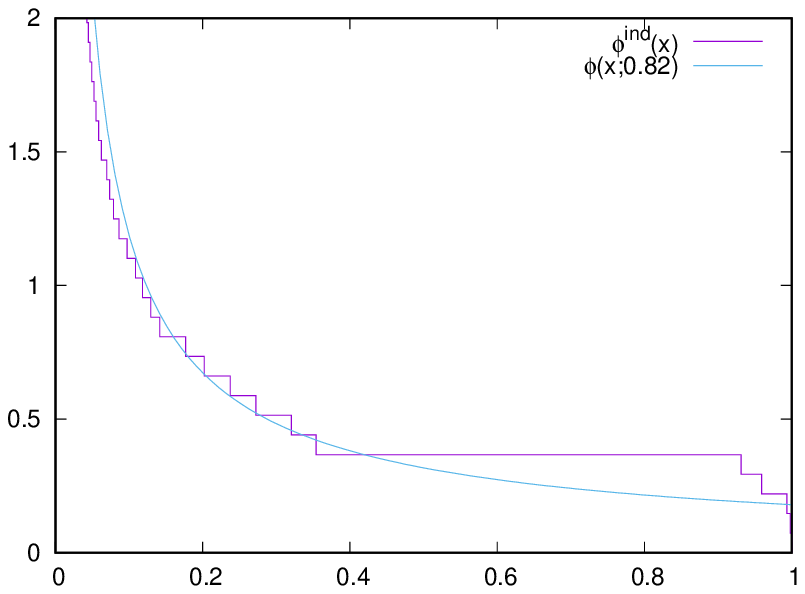}
    \hss
    \includegraphics[width=0.55\columnwidth]{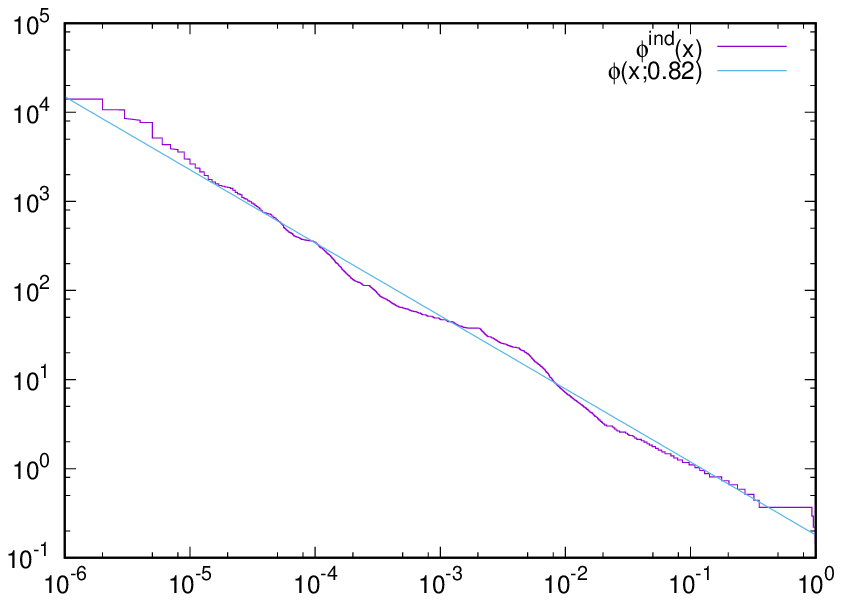}
    \hss}
\caption{Estimated industrial function $\phi^{ind}(x)$ (in red) and power-law function $\phi(x;0.82)=(1-0.82)\,x^{-0.82}$ (in blue), with normal axes (left) and double-logarithmic axes (right).}
\label{fig-ind}
\end{figure}

\begin{figure}
  \hbox to \columnwidth{
    \hss
    \includegraphics[width=0.5\columnwidth]{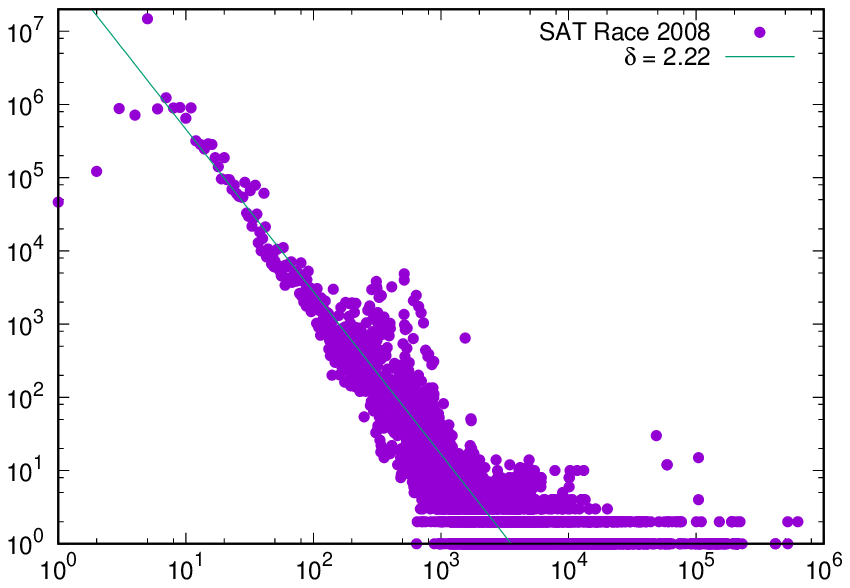}
    \hss
    \includegraphics[width=0.5\columnwidth]{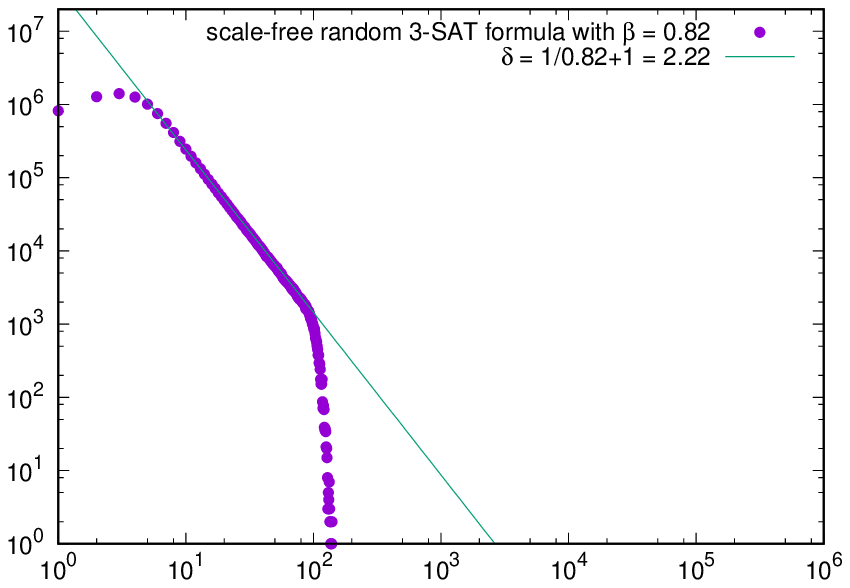}
    \hss
}
\caption{Comparison of the frequencies of variable occurrences obtained for the whole set of instances used in the SAT Race 2008, and for a scale-free random 3-SAT formula generated with $\beta=0.82$, $n=10^7$ and $m=2.5\cdot 10^7$. In both cases, the $x$-axis represents the number of occurrences, and the $y$-axis the number of variables with this number of occurrences. Both axes are logarithmic. It is also shown the line with slope $\alpha = 1/0.82 + 1 = 2.22$, corresponding to the function $f(x)= C\,x^{-2.22}$ in double-logarithmic axes.}
\label{fig-experiment}
\end{figure}

In the previous section we have scale-free random SAT instances. We want this models to generate formulas as close as possible to industrial ones.  Therefore, we want to compute the value of $\beta$ that best fits industrial instances. For this purpose we have studied the 100 benchmarks (all industrial) used in the SAT Race 2008. All together, they contain $n = 25693792$ variables, with a total of $\sum_{i=1}^{n} K_i = 349760681$ occurrences. Therefore, the average number of occurrences per variable is $E\left[K_i\right] = \sum_{i=1}^{n} K_i /n = 13.6$. If we used the classical (uniform) random model to generate instances with this average number of occurrences, most of the variables would have a number of occurrences very close to $13.6$.  However, in the analyzed industrial instances, close to $90\%$ of the variables have less than this number of occurrences, and more than $60\%$ have $6$ or less occurrences.  The big value of the average is produced by a small fraction of the variables that have a huge number of occurrences. This indicates that the number of occurrences could be better modeled with a power-law distribution. This was already suggested by \citet{regularkSAT}.

In order to check if those industrial instances (all together) are scale-free SAT formulas, and estimate the value of $\beta$, we compute the number of occurrences of each variable of each industrial instance. Then, we rename the indexes of such variables such that $K_i \geq K_{i+1}$, for $i=1,\dots,n-1$.  Now, before comparing $K_i$ with $i^{-\beta}/\sum_{j=1}^n j^{-\beta}$, we renormalize both functions such that both are defined in $[0,\, 1]$ and its integral in this range is $1$.  Hence, we define for the empirical $K_i$, the \emph{empirical} function $\phi^{ind}$ as
\[
\phi^{ind}(x) =_{def} \frac{n}{\sum_{j=1}^{n} K_j}\,K_{\lfloor n\,x\rfloor}
\]
and, for the theoretical function $P(i)$, the \emph{theoretical} function $\phi(x;\beta,n)$ as
\[
\phi(x;\beta,n) =
\frac{n}{\sum_{j=1}^{n} j^{-\beta}}\,\lfloor n\,x\rfloor^{-\beta} \approx 
\frac{n}{\zeta(\beta)+\frac 1{1-\beta}n^{1-\beta}}(n\, x)^{-\beta} =
\frac{1-\beta}{\frac{(1-\beta)\zeta(\beta)}{n^{1-\beta}}+1}x^{-\beta}
\]

When $\beta<1$, we get
\[
\phi(x;\beta)=\lim_{n\to\infty}\phi(x;\beta,n) = (1-\beta)x^{-\beta}
\]

In Figure~\ref{fig-ind} we represent both functions with normal axes, and with double-logarithmic axes. Notice that in double logarithmic-axes, the slope of $\phi^{ind}(x)$ allows us to estimate the value of $\beta=0.82$.

Theorem~\ref{thm-exponents-relation} allows us to ensure that the distribution of frequencies on the number of occurrences of variables follows a power-law distribution, with exponent $\delta = 1 / 0.82 + 1 = 2.22$.

Finally, we have generated a scale-free random 3-SAT formula with $n=10^7$ variables, $m=2.5\cdot 10^7$ clauses and $\beta = 0.82$. In Figure~\ref{fig-experiment}, we show the frequencies of occurrences of variables of this formula and compared it with those obtained for the SAT Race 2008, and the line with slope $\alpha = 1/0.82 + 1= 2.22$.

\section{Phase Transition in Scale-Free Random 2-SAT Formulas}\label{sec-2SAT}

\citet{mickgetssome} proved that a random formula with $(1+o(1))cn$ clauses of size $2$ over $n$ variables, is satisfiable with probability $1-o(1)$, when $c<1$, and unsatisfiable with probability $1-o(1)$, when $c>1$, where $o(1)$ represents a quantity tending to zero as $n$ tends to infinity.

As will see in this section, a similar result for scale-free random 2-SAT formulas can be obtained using percolation and mean field techniques.

Percolation theory describes the behavior of connected components in a graph when we remove edges randomly. Erd\"os and R\'enyi~\cite{erdos-renyi59} are considered the initiators of this theory. In this seminal paper on graph theory they proposed a random graph model $G(n,m)$ where all graphs with $n$ nodes and $m$ edges are selected with the same probability. Gilbert~\cite{gilbert} proposed a similar model $G(n,p)$ where $n$ is also the number of nodes, and every $n \choose 2$ possible edge is selected with probability $p$. For not very sparse graphs (when $p\,n^2\to \infty$), both models have basically the same properties taking $m={n \choose 2}p$. Erd\"os and R\'enyi~\cite{erdos-renyi60} also studied the connectivity on these graphs and proved that

\begin{itemize}
\item when $n\,p<1$, i.e. $m < n/2$, a random graph almost surely has no connected component larger than $\mathcal{O}(\log n)$,
\item when $n\,p=1$, i.e. $m = n/2$ a largest component of size $n^{2/3}$ almost surely emerges, and
\item when $n\,p>1$, i.e. $m > n/2$, the graph almost surely contains a unique giant component with a fraction of the nodes and no other component contains more than $\mathcal{O}(\log n)$ nodes.
\end{itemize}

Phase transitions is a phenomenon that has been observed and studied in many AI problems. Many problems have an order parameter that separates a region of solvable and unsolvable problems, and it has been observed that hard problems occur at critical values of this parameter. Mitchell et al.~\cite{mitchell} found this phenomena in 3-SAT when the ratio between number of clauses and variables is $m/n \approx 4.3$. Gent and Walsh~\cite{gent-walsh} observed the same phenomenon with clauses of mixed length. 

There is a close relationship between SAT problems and graphs. Both, percolation on graphs and phase transition in SAT (or other AI problems) are critical phenomena and both can be studied using mean field techniques from statistical mechanics. Percolation theory has been used and inspired works in the literature about random SAT and satisfiability threshold, e.g. in~\citet{achlioptas} to determine the satisfiability threshold of 1-in-k SAT and NAE 3-SAT formulas. Some results on graphs have been previously extended to 2-SAT. For instance,~\citet{Sinclair13} adapted Achlioptas processes for graphs into formulas. \citet{Bollobas01} investigated the scaling window of the 2-SAT phase transition, finding the critical exponent of the order parameter and proving that the transition is continuous, adapting results of~\citet{Bollobas84} for Erd\"os-R\'enyi graphs. The relationship between percolation in random graphs and phase transition in random 2-SAT formulas is suggested in many other works. For instance, \citet{2+p} when studying the phase transition in $2+P$-SAT (a mixture of $(1-p)m$ clauses of size $2$ and $p\,m$ clauses of size $3$) already mention that \emph{``It is likely that the 2SAT transition results from percolation of these loops...''}. 
\citet{cooper} use the emergence of a giant component in a graph to prove the existence of a phase transition in 2-SAT random formulas with prescribed degrees, using the configuration model. They find, for this model, the same criterion as \citet{tobiasAAAI17} and us in Theorem~\ref{thm-criterion}.
  
Given a random 2-SAT formula with $m$ clauses over $n$ variables, we can construct an Erd\"os-R\'enyi graph where the $2\,n$ literals are nodes, and the $m$ clauses are edges. At the percolation point $m=(2\,n)/2$ of this graph a \emph{giant component} emerges. Just at the same point $m=n$ the 2-SAT phase transition threshold is located.  However, despite the coincidence in the point, the relation between both facts is not direct: a giant component in the graph is not the same as a giant (hence, unsatisfiable) loop of implications in the SAT formula. The connection between two edges $a\leftrightarrow b$ and $b\leftrightarrow c$ in the graph is given by a common node (literal) $b$. Whereas, in the SAT formula, the resolution between $a\vee b$ and $\neg b\vee c$ is through a variable $b$ that is affirmed in one clause and negated in the other. In this section, we elaborate on the relation of giant components
in graphs and unsatisfiability proofs in 2-SAT formulas.

\subsection{A Criterion for Phase Transition in 2-SAT}\label{subsec-criterion}

Unsatisfiability proofs of 2-SAT formulas are characterized by \emph{bicycles}. Let $F$ be a 2-SAT formula. Any sequence of literals $x_1,\dots,x_s$ satisfying $\neg x_i\vee x_{i+1}\in F$, for any $i=1,\dots,s-1$, is called an \emph{implication sequence}. We say that $y$ \emph{implies} $y'$, if there exists an implication sequence of the form $y,x_1,\dots,x_n,y'$. Any implication sequence of the form $x_1,\dots,x_s,x_1$ is called a \emph{cycle}. A \emph{bicycle} is a cycle $x_1,\dots,x_n,x_1$ such that there exists a variable $a$ satisfying $\{a,\neg a\} \subseteq \{x_1,\dots,x_n\}$.

A 2-SAT formula is unsatisfiable if, and only if, it contains a bicycle~\cite{implicationgraph,mickgetssome}.

We will also consider random graphs with $n$ nodes and $m$ edges,\footnote{We will deal with distinct models of random graphs where every graph has a distinct probability of being chosen.} and connected components, defined as subsets of nodes such that any pair of them is connected by a path inside the component. A random graph of size $n$ is said to contain a \emph{giant connected component} if almost surely\footnote{\emph{Almost surely} means that, in the model of random graph, as $n\to\infty$, the probability tends to one.} it contains a connected component with a positive fraction of the nodes. Given a model of random graphs, we say that $c$ is the percolation threshold if any random graph with $n$ nodes and more than $c\,n$ edges almost surely contains a giant component. In a random graph, the degree of a node $x$, noted $k_x$, is a random variable. The random variable $k$ represents the degree of a random node chosen with uniform probability.\footnote{In some of the models of random graphs that we will consider, not all degrees of nodes follow the same probability distribution. Therefore, we will distinguish between $k$ and $k_x$. }

As we commented above, we can represent any 2-SAT formula as a graph where nodes are literals, and clauses $a\vee b$ are edges between literals $a$ and $b$. In classical 2-SAT random formulas, since literals are chosen independently with uniform probability, the generated graph will be an Erd\"os-R\'enyi graph following the model $G(2\,n,m)$. However, a connected component in the graph is not necessarily an unsatisfiability proof of the formula.

First, in a random SAT formula, we may have repeated clauses, which means that from $m$ clauses we will obtain less than $m$ edges. However, for a linear number of clauses, when $\beta<1/2$, there are $(1-o(1))\,m$ distinct clauses or edges. In the classical case, in the limit $n\to\infty$, with a linear number of clauses $m=\mathcal{O}(n)$, and a quadratic number of possible clauses, the probability of any clause is $\mathcal{O}(n^{-2})$, and the probability of being repeated $m\,\mathcal{O}(n^{-2})=\mathcal{O}(n^{-1})$. Therefore, the fraction of repeated clauses is negligible. For scale-free 2-CNF formulas, in Theorem~\ref{thm-exponent}, we will see that, if $\beta<1/2$ then clauses have probability $o(n^{-1})$. Precisely, the most probable 2-CNF clause is $x_1\vee x_2$.
This means, that after generating $\mathcal{O}(n)$ clauses, the probability that a new generated clause has already been generate previously is bounded by
$$
P(x_1\vee x_2)\, \mathcal{O}(n)\approx \frac{2!\,1^{-\beta}2^{-\beta}}{(2\sum_{i=1}^n i^{-\beta})^2}\, \mathcal{O}(n) =
\frac{2^{-1-\beta}}{(\zeta(\beta)+\frac{n^{1-\beta}}{1-\beta}+\mathcal{O}(n^{-\beta}))^2}\mathcal{O}(n) = 
\mathcal{O}(n^{2\beta-1})
$$
This probability bounds the value of the fraction of repeated clauses, that it is meaningless when $\beta<1/2$.

Second, graph connected components and cycles are not the same structure. Therefore, the existence of a giant connected component and the existence of a giant cycle are independent facts.

Molloy and Reed~\cite{Molloy-Reed} and Cohen et al.~\cite{resilence-internet} have studied the existence of giant components in random graphs with heterogeneous and fixed node degrees. Molloy and Reed~\cite{Molloy-Reed} prove that the critical point is at 
$$
Q(\lambda)=\sum_{i>0}i(i-2)\lambda_i=0
$$
where $\lambda_i$ is the fraction of nodes with degree $i$. Whereas, Cohen et al.~\cite{resilence-internet} independently prove (but in a much more informal way) that the critical point is characterized by
$$
\frac{\E[k^2]}{\E[k]}=2
$$
where $k$ is the degree of a random node, and $E$ denotes expectation.
It is easy to see that both criterion are exactly the same.
Interestingly, the criterion depends not only on the expected degree of nodes, but also on the expected square degree of the nodes, hence on the variability of node's degrees. The variability on the nodes degree plays an important role in the location of the percolation threshold. For instance, in the Erd\"os-R\'enyi model, the percolation threshold is located at $m/n =1/2$, hence the expected degree of nodes is $1/2$. However, the expected degree of nodes belonging to the same connected component\footnote{Recall that minimally connected components are trees, where the number of edges is equal to the number of nodes minus one.} of size $r$ is, at least, $(r-1)/r\approx 1$. This discrepancy is only possible if the variability in node's degree is high. This also explains why, in regular random formulas, where we impose variables to occur exactly the same number of times (instead of the same average number of times), we get distinct phase transition thresholds.

Cohen et al.~\cite{resilence-internet} starts assuming that loops of connected nodes may be neglected. In this situation, the percolation transition takes place when a node $i$, connected to a node $j$ in the connected component, is also connected in average to at least one other node, i.e. when $\E[k_i\mid i\leftrightarrow j] = \sum_{k_i} k_i P(k_i\mid i\leftrightarrow j) = 2$. 

Molloy and Reed~\cite{Molloy-Reed} give a more detailed proof that we will try to summarize. Given the list of fixed degrees $k_i$ of every node, they describe a random algorithm that constructs (exposes) all graphs compatible with these degrees with the same probability, exposing connected components one by one:

Let $c_i$ be the degree of node $i$ on the partially exposed graph. Initially, set $c_i=0$, for every node. Then, until $c_i=k_i$, for all nodes, repeat the following actions. If, for some node $i$, we have $0<c_i<k_i$, then (case~A) select it; otherwise, (case~B) choose freely a node $i$ such that $c_i=0$. Then, in both cases, choose another node $j\neq i$ with probability $P(j)\sim k_j-c_j$. Expose the edge $i\leftrightarrow j$, and increase $c_i$ and $c_j$. Notice that every time we execute case~B, we start the exposition of a new connected component of the graph.

Let $X_r$ be the random variable representing the number of \emph{open vertexes} in partially exposed nodes, i.e. $X_r=\sum_{c_i>0} k_i-c_i$, after the $r$th edge $i\leftrightarrow j$ has been exposed. Notice that we execute case~B when we have $X_{r-1}=0$, and we get $X_r=k_i+k_j-2$. When we execute case~A, there are two situations: (case A1) if node $j$ is a partially exposed node (i.e. $0<c_j$), then $X_r=X_{r-1}-2$, and (case A2) if node $j$ has never been exposed (i.e. $c_j=0$), then $X_r=X_{r-1}+k_j-2$. 

Suppose that cases~B and A1 does not happen very often. Then, the expected change in $X_r$ is
$$
\E[X_r-X_{r-1}] \approx \frac{\sum_{j} k_j (k_j-2)}{\sum_{j} k_j}=\frac{Q(\lambda)}{\E[k]}
$$
and, since $X_r-X_{r-1}\geq -1$, a standard result of random walk theory ensures that if $Q(\lambda)>0$ then, after $\Theta(n)$ steps, $X_r$ is almost surely of order $\Theta(n)$; and if $Q(\lambda)<0$, then $X_r$ returns to zero fairly quickly. In the first case, we generate a giant connected component of size $\Theta(n)$, and in the second case, no component is larger than $\mathcal{O}(\log n)$. In order to prove that executions of case~A1 do not hurt, Molloy and Reed prove that the probability of choosing a partially exposed node (a node with $c_j>0$) is negligible unless we have already exposed a fraction $\Theta(n)$ of the nodes in the current connected component.

Theorems~\ref{thm-criterion} and ~\ref{thm-criterion2} establish a similar criterion for the existence of a giant set of implied literals from a given one. This almost surely implies unsatisfiability of the formula. The proof of the theorems resemble Molloy and Reed's and Cohen et al.'s proofs. In Theorem~\ref{thm-criterion} we fix the number of occurrences of every literal, whereas in Theorem~\ref{thm-criterion2} we fix the number of occurrences of variables. Compared with the definition of $Q$ in Molloy and Reed's, we observe that in Theorem~\ref{thm-criterion2}, the $2$ is replaced by a $3$. In Theorem~\ref{thm-criterion}, we combine the number of literals $k_i$ with the number of their negated $k_{-i}$, and the constant is a $1$ instead of a $2$. Notice that the condition in this case is equal to the condition found by \citet{cooper} for the configuration method and prescribed literal degrees.

\begin{theorem}\label{thm-criterion}
  Let $F$ be a 2-CNF formula generated in a random model with variables $\{x_1,\dots,x_n\}$, where every literal $x_i$ (resp $\neg x_i$) is selected with probability $p_i$ (resp $p_{-i})$, and literals in clauses are not correlated (i.e. $P(x_i\vee x_j)\approx p_i\,p_j$). Assume that $p_i=o(1)$ and $m = \mathcal{O}(n)$. Let $k_i =2\,m\,p_i$ be the expected number of occurrences of literal $x_i$. Then, if $\sum_{i=-n}^n k_i(k_{-i}-1) > 0$, then almost surely $F$ is unsatisfiable.
\end{theorem}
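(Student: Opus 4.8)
The plan is to reduce the unsatisfiability of $F$ to the emergence of a giant strongly connected component (SCC) in its implication digraph, and to force that emergence by a Molloy--Reed/Cohen-style exploration whose branching ratio is governed precisely by $\sum_{i=-n}^{n}k_i(k_{-i}-1)$. Concretely, build the implication digraph $D$ on the $2n$ literal-nodes, with an arc $\ell\to\ell'$ for every clause $\neg\ell\vee\ell'\in F$, so that each clause $a\vee b$ contributes the two arcs $\neg a\to b$ and $\neg b\to a$; then $D$ has $2m$ arcs, the out-degree of a literal $\ell$ equals the number of occurrences of $\neg\ell$ (about $k_{\neg\ell}$) and its in-degree equals the number of occurrences of $\ell$ (about $k_\ell$). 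By the cited characterization, $F$ is unsatisfiable iff $D$ contains a bicycle, i.e.\ a closed implication walk through some complementary pair $\{a,\neg a\}$; equivalently, iff some SCC of $D$ contains a complementary pair. The key structural observation is that $D$ is skew-symmetric ($\ell\to\ell'$ iff $\neg\ell'\to\neg\ell$), so the complement $\overline S$ of any SCC $S$ is again an SCC of the same size, and $S$ and $\overline S$ are either equal or disjoint; hence it suffices to produce an SCC $S$ with $S\cap\overline S\neq\emptyset$, for then $S$ is self-complementary, contains some $\ell$ together with $\neg\ell$, and strong connectivity yields implication sequences $\ell\rightsquigarrow\neg\ell$ and $\neg\ell\rightsquigarrow\ell$ whose concatenation is a bicycle.

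To produce such an SCC I would run the standard exploration. Since the literals of a clause are uncorrelated and $p_i=o(1)$, in the sparse regime $m=\mathcal{O}(n)$ the digraph $D$ behaves, for the purpose of component sizes, like the configuration-model digraph with the prescribed in-/out-degree sequences (repeated clauses, and clauses with a repeated variable, forming a negligible fraction exactly as in the graph case). Expose the out-reachable set of a uniformly random literal $w_0$ one arc at a time, and let $X_r$ be the number of still-dangling out-arc stubs after $r$ steps. A step consumes one stub and reaches a literal $b$ with probability proportional to its in-degree, $\Pr[b]\approx k_b/(2m)$; unless $b$ has already been reached, exposing $b$ creates about $k_{\neg b}$ fresh stubs. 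Hence, using $\sum_{i=-n}^{n}k_i=2m$ (the total number of literal occurrences),
\[
\E[X_r-X_{r-1}]\;\approx\;\sum_{b}\frac{k_b}{2m}\bigl(k_{\neg b}-1\bigr)\;=\;\frac{1}{2m}\sum_{i=-n}^{n}k_i\bigl(k_{-i}-1\bigr),
\]
the sum running over all $2n$ literals $b$. The increments are bounded below by $-1$, so the random-walk argument of Molloy and Reed applies: a positive drift makes the walk reach size $\Theta(n)$ with probability bounded away from $0$, and, running the exploration over many starting literals and invoking the usual second-moment and uniqueness arguments for configuration-model digraphs, this yields that with high probability $D$ has a unique giant SCC. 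A positive drift is exactly the hypothesis $\sum_{i=-n}^{n}k_i(k_{-i}-1)>0$. Finally, by skew-symmetry the complement of this giant SCC is again a giant SCC of the same size, so uniqueness forces it to be self-complementary; by the first paragraph this gives a bicycle, and $F$ is unsatisfiable with high probability.

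The delicate point, and the main obstacle, is making the exploration rigorous in the presence of heavy-tailed degrees: a few literals may have degree as large as $n^{\beta}$, so one must verify that the corrections from revisiting already-exposed literals (and from restarting fresh components) remain lower-order until the explored set is genuinely linear in $n$, convert the heuristic drift identity into a concentration statement via a martingale or differential-equation argument, and carry through the passage from ``giant out-component with positive probability'' to ``unique giant SCC with high probability''. The role of skew-symmetry is that the in-component exploration has the \emph{same} branching ratio, which is why the single scalar condition $\sum_{i=-n}^{n}k_i(k_{-i}-1)>0$ suffices.
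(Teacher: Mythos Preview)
Your approach is essentially the paper's: both build the implication structure on the $2n$ literals, run a Molloy--Reed exploration of the out-reachable set from a random literal, and identify the drift of the frontier process with $\frac{1}{2m}\sum_{i=-n}^{n}k_i(k_{-i}-1)$, so that positivity of this sum makes the exploration supercritical. The one genuine difference is the endgame. The paper does not invoke uniqueness of a giant SCC; instead it argues directly that a constant fraction $c$ of starting literals $x$ have an implied set of linear size (or already hit a contradiction), so that $\Pr[x\Rightarrow\neg x\ \text{and}\ \neg x\Rightarrow x]\ge c^4$ for each variable, giving $\Pr[F\ \text{satisfiable}]\le(1-c^4)^n\to 0$. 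Your route---skew-symmetry of $D$ forces the complement of an SCC to be an SCC of equal size, hence a \emph{unique} giant SCC must be self-complementary and contain a bicycle---is cleaner and more structural, but buys that cleanliness at the cost of an extra ingredient (uniqueness of the giant SCC in a heavy-tailed directed configuration model) that the paper's cruder $c^4$ bound sidesteps. Both arguments are sketches at the same level of rigor, and both correctly isolate the heavy-tail revisitation estimate as the place where real work remains.
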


\begin{proof}
  The proof resembles Molloy and Reed's proof for the percolation threshold on graphs. This proof is quite long, and our proof does not differ very much. Therefore, we will only sketch it.
  
  In our case, we do not deal with connected components. In fact, we do not expose the random formula with our algorithm. We assume that we already have the formula, and we describe in Algorithm~\ref{algorithm-exposition} how to enumerates the set of literals implied by a given initial literal $x$.
  \begin{algorithm}[ht]
  \KwIn{$F$,$x$}
  \ForEach{y}{$c_y=0$; $o_y=false$}
  $o_x=true$\;
  \While{$(\exists y.o_{\neg y}=true\wedge c_y<k_y)\wedge \neg (\exists y.o_y=o_{\neg y}=true)$}
  {
  select $y\vee z\in F$ such that $o_{\neg y}=true$\;
  $o_z := true$\label{line-oz}\;
  $c_y:= c_y + 1$\label{line-cy}\;
  $c_z := c_z + 1$\label{line-cz}\;
  $F := F \setminus \{y\vee z\}$
  }
  return $\{z\mid o_z=true\}$
  \caption{Algorithm for finding literals implied by $x$}\label{algorithm-exposition}
  \end{algorithm}
  
  The Boolean variable $o_y$ denotes if the literal has been reached from the initial literal $x$ and the counter $c_y$ denotes the number of clauses containing $y$ that we have already removed from the formula. Therefore, $k_y-c_y$ is the number of clauses containing $y$ that still remains in $F$. When $x$ implies $y$ and $\neg y$, for some variable $y$, we say that $x$ implies a contradiction. In this case, $x$ also implies $\neg x$. The algorithm returns the set of literals implied by $x$ or a contradiction (in this second case, we abort, since we already have $x\to \neg x$ that is what we want to check). Notice also that $c_y>0$ implies $o_y=true\vee o_{\neg y}=true$.
  
  Notice that this algorithm is quite similar to Molloy and Reed's algorithm for exposing connected components of a random graph. Counter $c_y$ has a similar meaning and we only require the Boolean variable $o_y$ to denote the condition of \emph{open vertex} (expressed as $c_y>0$ in Molloy and Reed's algorithm). The algorithm is deterministic, if you consider the formula given. However, for a random formula, the algorithm perform exactly the same steps and can be seen as a random algorithm. Similarly, we can define the random variable $X_r = \sum_{o_{\neg x}=true} k_x - c_x$ after iteration $r$. At every iteration, this variable satisfies:
  \begin{description}
      \item[(case A)] $X_r = X_{r-1} -1$, if $o_z=true$ and $o_{\neg z}=false$,
      \item[(case B)] $X_r = X_{r-1} -1 + k_{\neg z}$, if $o_z=o_{\neg z}=false$, and 
      \item[(case C)] $X_r = X_{r-1} -2 + k_{\neg z} - c_{\neg z}$, if $o_z=false$ and $o_{\neg z}=true$.
  \end{description}
  Notice that line~\ref{line-cy} decreases $X_r$ in $1$, line~\ref{line-cz} decreases $X_r$ in $1$, when $o_{\neg z}=true$, and line~\ref{line-oz} increases $X_r$ in $k_{\neg z}-c_{\neg z}$, when $o_z=false$. However, if both $o_z$ and $o_{\neg z}$ are false, then $c_{\neg z}$ is zero. After case~C, we get a contradiction and finish.
  In case B, the expected gain in $X_r$ is
  \[
  \E[X_r - X_{r-1}] \approx \frac{\sum_z k_z\,(k_{\neg z} -1)}{\sum_z k_z}
  \]
  In the case~A, the random variable only decreases by one. Like Molloy and Reed's, we can also argue that the case~A is negligible, unless we have already added to the set of implied literals a constant fraction of them.
  
  Therefore, reproducing all the lemmas of Molloy and Reed's proof, we can conclude that, when $\sum_z k_z\,(k_{\neg z}-1) >0$, almost surely there exists a constant $0<c<1$ such that for a fraction $c$ of initial literals $x$, the set of literals implied by $x$ is a fraction $c$ of all literals or contains a contradiction, and hence, $X$ implies $\neg x$.
  For a particular variable $x$, the probability that $x$ implies $\neg x$ and $\neg x$ implies $x$ is at least $c^4$. The probability that the formula is satisfiable is at most $(1-c^4)^n$, that tends exponentially to zero as $n$ tends to infinity.
 \end{proof}

\begin{theorem}\label{thm-criterion2}
Let $F$ be a 2-CNF formula generated in a random model with variables $\{x_1,\dots,x_n\}$, where every variable $x_i$ is selected with probability $P_i$ and negated with probability $1/2$, and variables in clauses are not correlated. Assume that $P_i=o(1)$ and $m=\mathcal{O}(n)$. Let $K_i=2\,m\,P_i$ be the expected number of occurrences of variable $x_i$. Then, if $\sum_{i=1}^n K_i(K_i-3) > 0$, then almost surely $F$ is unsatisfiable.

The condition $\sum_{i=1}^n K_i(K_i-3) > 0$ is equivalent to $\E[K^2]/\E[K] > 3$
\end{theorem}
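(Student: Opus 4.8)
I plan to mirror, almost line for line, the proof of Theorem~\ref{thm-criterion}, which is itself an adaptation of Molloy and Reed's exposure argument~\cite{Molloy-Reed}; the only genuinely new ingredient is a recomputation of the one--step drift that accounts for the fact that in this model we fix the number of occurrences of each \emph{variable} rather than of each literal. Concretely, I would reuse Algorithm~\ref{algorithm-exposition} verbatim: given the already generated formula $F$ and a starting literal $x$, it enumerates the literals implied by $x$ and aborts as soon as it has reached both $y$ and $\neg y$ for some variable $y$, which witnesses $x\to\neg x$. As in the proof of Theorem~\ref{thm-criterion} I would follow the random variable $X_r=\sum_{o_{\neg x}=true}(k_x-c_x)$, the number of still-unused clause occurrences hanging off the set of literals reached after $r$ steps, and carry out the identical three-way split: case~A (the newly reached literal $z$ is fresh and so is $\neg z$ only in the weak sense that $o_{\neg z}$ is false), case~B ($o_z=o_{\neg z}=false$), and the terminal case~C ($o_{\neg z}=true$). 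The bookkeeping $X_r=X_{r-1}-1$ in case~A and $X_r=X_{r-1}-1+k_{\neg z}$ in case~B is exactly as before, since it depends only on the ``open vertex'' status of literals, not on how the formula was sampled.

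The point where the two proofs diverge is the expected gain in case~B. When a fresh literal $z$ is discovered, its variable $|z|$ is --- up to the usual $o(1)$ freshness and no-repeated-variable corrections that Molloy and Reed handle --- size-biased by $K_{|z|}$, its sign is a fair coin, and the clause that just derived $z$ consumes one occurrence of $|z|$, of sign $z$. Because the occurrences of $|z|$ carry independent fair signs, the number of clauses usable from $z$ afterwards, namely $k_{\neg z}$, has expectation $(K_{|z|}-1)/2$ conditional on the realized degree sequence: one occurrence is spent, and the remaining $K_{|z|}-1$ split evenly between the two signs on average. Hence
\[
\E[X_r-X_{r-1}\mid X_{r-1}>0]\;\approx\;-1+\E_{|z|\,\propto\,K_{|z|}}\!\left[\frac{K_{|z|}-1}{2}\right]\;=\;\frac12\left(\frac{\sum_i K_i^2}{\sum_i K_i}-3\right)\;=\;\frac{\sum_i K_i(K_i-3)}{2\sum_i K_i},
\]
so the walk has strictly positive drift exactly when $\sum_i K_i(K_i-3)>0$. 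This is where Molloy and Reed's ``$2$'' (and Theorem~\ref{thm-criterion}'s ``$1$'') becomes a ``$3$'': the extra unit comes from the fact that reaching a new literal both burns one occurrence of its variable and leaves only half of the remaining occurrences of the ``useful'' sign. Since the realized degrees concentrate around their means $2\,m\,P_i$ and $\sum_i K_i(K_i-3)$ has no dominating term, the hypothesis stated for the expected degrees is enough to guarantee positive drift almost surely.

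From here I would reproduce the rest of Molloy and Reed's lemmas exactly as is done for Theorem~\ref{thm-criterion}: because $X_r-X_{r-1}\ge-1$, a positive-drift walk started at $0$ stays $\Theta(n)$ for $\Theta(n)$ steps with constant probability, so for a fraction $c>0$ of starting literals $x$ the set of literals implied by $x$ either reaches a positive fraction of all $2n$ literals --- hence almost surely contains some variable together with its negation --- or hits a contradiction outright; in both cases $x\to\neg x$. For a fixed variable $x$ the events ``$x\to\neg x$'' and ``$\neg x\to x$'' then each have probability bounded below by a positive constant (as argued for Theorem~\ref{thm-criterion}), so a bicycle is present and the probability that $F$ is satisfiable is at most $(1-c^4)^n\to0$. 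The stated equivalence $\sum_i K_i(K_i-3)>0\iff \E[K^2]/\E[K]>3$ is immediate on dividing by $n$.

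The main obstacle is not conceptual: it is the same heavy technical core underlying Molloy and Reed's and hence Theorem~\ref{thm-criterion}'s proof --- showing that the heuristic one-step drift genuinely governs $X_r$, and in particular that case~A and the ``not-fresh'' corrections are negligible until a constant fraction of the implied set has been exposed. The genuinely new but short piece is the drift recomputation above, and getting its constant right --- i.e.\ correctly handling the coupling $k_{x_i}+k_{\neg x_i}=K_i$ the first time a partner literal is seen --- is the conceptually central step. I would also note, as a shortcut, that this model is an instance of the model of Theorem~\ref{thm-criterion} with $p_i=p_{-i}=P_i/2$, so that $k_i=k_{-i}=K_i/2$ and Theorem~\ref{thm-criterion}'s hypothesis reads $\sum_i K_i(K_i-2)>0$, which is implied by $\sum_i K_i(K_i-3)>0$; the direct argument is preferable here only because it makes the comparison with the percolation criterion of \cite{Molloy-Reed} transparent.
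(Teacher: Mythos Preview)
Your proposal is correct and follows essentially the same approach as the paper: both reuse the Molloy--Reed exposure argument underlying Theorem~\ref{thm-criterion} verbatim and recompute the one-step drift for the variable-degree model, arriving at $\E[k_{\neg z}]=(K_{|z|}-1)/2$ once one occurrence of $|z|$ has been consumed and hence at drift $\tfrac12\sum_i K_i(K_i-3)/\sum_i K_i$. Your write-up is more detailed than the paper's two-line sketch, and your closing shortcut---that Theorem~\ref{thm-criterion} with $p_i=p_{-i}=P_i/2$ already yields the weaker hypothesis $\sum_i K_i(K_i-2)>0$---is a correct observation the paper does not make.
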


\begin{proof}
  The proof is, like in Theorem~\ref{thm-criterion}, based on \citet{Molloy-Reed}'s proof. In this case, however, the expected gain in the random variable $X_r$ is given by:
  \[
  \E[X_r-X_{r-1}] = \frac{\sum_{i=1}^n \frac{K_i}{2}\left( \frac{K_i-1}{2}-1\right)}{\sum_{i=1}^n \frac{K_i}2}
  \]
  since $\frac{K_i}2$ is the expected value of $k_i$ and $\frac{K_i-1}2$ is the expected value of $k_{-i}$ conditioned to the existence of one positive occurrence of $x_i$.
  Then, the condition $\E[X_r-X_{r-1}]>0$ is equivalent to $\sum_{i=1}^n K_i(K_i-3) > 0$.
  \end{proof}
  
  For the proof of Theorem{\ref{thm-criterion2}, we could also use the \citet{resilence-internet}'s argument. In the case of graphs, we get a giant connected component when a node $i$, connected to a node $j$, is also connected in average to at least one other node.  Formally, when the expected degree of $i$, conditioned to the fact that $i$ and $j$ are connected, is $\E[k_i\mid i\leftrightarrow j]=2$.

  In our case, in order for a giant cycle to emerge, when there is a clause $x\vee y$, we have to find, at least, another clause containing $\neg x$. 
  In this situation, the expected number of other clauses containing $x$ is $2$, that added
  to the original clause $x\vee y$, gives a minimum number of $3$ clauses containing $x$.  
  Given a pair of literals $x$ and $y$, let $\pm x\vee y$ express the fact: ``$x\vee y\in F$ or $\neg x\vee y\in F$. 
  Formally, our criterion can be written as
  \[
  \E[K_x\mid \pm x\vee y] > 3
  \]
  This criterion is the necessary and sufficient condition to continue the construction of a set of clauses, ensuring that the probability that this set contains a fraction of the literals tends to one.
  
  Using Bayes, we have
  \[
    \E[K_x\mid \pm x\vee y]
    = \sum_{k=0}^\infty k\, P(K_x=k\mid  \pm x\vee y)\\
    = \sum_{k=0}^\infty k\, \frac{P(K_x=k\wedge  \pm x\vee y)}{P(\pm x\vee y)}\\
    = \sum_{k=0}^\infty k\, \frac{P(\pm x\vee y\mid K_x=k)P(K_x=k)}{P(\pm x\vee y)}
  \]
  
  Given a pair of literals $x$ and $y$, the probability that either $x\vee y$ or $\neg x\vee y$ are one of the clauses of the formula, conditioned by the fact that the number of occurrences of variable $x$ is $k$ (and assuming that clauses are not repeated) is: 
  \( P(\pm x\vee y\mid K_x=k) =
  \frac{k}{2(n-1)} \) and, the probability of the same fact without condition: \( P(\pm
  x\vee y) = \frac{\E[K_x]}{2(n-1)} \).  
  Therefore
  \[
  \E[K_x\mid \pm x\vee y] 
    = \sum_{k=0}^\infty k\, \frac{\frac{k}{2(n-1)}P(K_x=k)}{\frac{\E[K_x]}{2(n-1)}}\\
    = \frac{\sum_{k=0}^\infty k^2\,P(K_x=k)}{\E[K_x]} = \frac{\E[K^2]}{\E[K]}>3
  \]
  defines an unsatisfiability threshold.
  %

The previous theorems ensure that, when the criterion is satisfied, there is a giant bicycle containing a fraction of the literals, and the formula is unsatisfiable. However, if the formula is unsatisfiable, it can be due to a \emph{small} bicycle. Therefore, the reverse implication is not necessarily true. In other words, Theorems~\ref{thm-criterion} and ~\ref{thm-criterion2} establish a sufficient (but not necessary) condition for unsatisfiability of random 2-SAT formulas, which result into an upper bound for the phase transition point. However, we conjecture that, either giant bicycles are more probable than small bicycles and the percolation threshold (obtained with the criterion) is equal to the phase transition point, or, if small bicycles are more probable, the phase transition point is at $c=0$.

\subsection{Classical 2-SAT Formulas}\label{subsec-classical}

Theorems~\ref{thm-criterion} and~\ref{thm-criterion2} may be used to find the phase transition point in terms of number of clauses divided by number of variables. In this subsection, we apply the technique to (classical) random 2-SAT formulas.

We start with a formula (or graph), not necessarily at the critical threshold. Then, we apply a percolation process where a fraction $1-p$ of randomly selected clauses (edges) are removed, such that the remaining $p$ fraction of edges are in the critical threshold. If we start with the complete formula with all possible $2^2{n\choose 2}$ clauses over $n$ variables, and remove clauses with uniform probability, this process generates a (classical) random 2-SAT formula in the SAT-UNSAT transition point (except for the lack of repeated clauses).

If $k'_x$ is the number of occurrences of literal $x$ in the original graph, then, after removing the $(1-p)$ fraction, the new distribution on the number of occurrences is $P(k_x) = \sum_{k'_x=k_x}^\infty P(k'_x){k'_x\choose k_x}\,p^{k_x}\,(1-p)^{k'_x-k_x}$. Using this binomial distribution we get the moments $\E[k_x]=p\,\E[k'_x]$ and $\E[k_x^2] =
p^2\,\E[(k'_x)^2]+p(1-p)\E[k'_x]$ for any literal $x$. Since $K_x = k_x + k_{\neg x}$, and $k_x$ and $k_{\neg x}$ are independent variables with the same distribution, we have $\E[K_x] = 2\,\E[k_x]$ and $\E[K_x^2]= 2\,\E[k_x^2]+2\,E^2[k_x]$. If we impose the criterion of Theorem~\ref{thm-criterion2} to this new formula we get
\[
\frac{\E[K^2]}{\E[K]}= \frac{2\,p^2\,\E[(k')^2]+2\,p(1-p)\E[k']+2\,p^2\,E^2[k']}{2\,p\,\E[k']}=3
\]
Hence
\[
p=\frac{2}{\frac{\E[(k')^2]}{\E[k']}+\E[k']-1}
\]
For the complete formula we have $k'_x = 2(n-1)$ for any literal, therefore $p=1/(2n-5/2)$. The expected number of clauses in the phase transition threshold is
\[
\E[m]= 2^2{n\choose 2}\,p = \frac{2n(n-1)}{2n-5/2}=n + \mathcal{O}(1)
\]

This proves that the clause/variable fraction at the 2-SAT phase transition threshold is at most $m/n=1$, reproducing the results of Chv\'atal and Reed~\cite{mickgetssome}.

For the expected moments we get $\E[k]\approx 1$ and $\E[k^2]\approx 2$, for number of occurrences of literals, and $\E[K]\approx 2$ and $\E[K^2]\approx 6$, for number of occurrences of variables.

Now, consider the case of (classical) regular random 2-SAT formulas. These are random formulas where the number of occurrences of a literal minus the number of occurrences of another literal is, at most, one. 
Assume that all literals have exactly the same number of occurrences $k_x=m/n$.
Applying Theorem~\ref{thm-criterion}, without any need of percolation process, we get $\sum_{i=-n}^n k_i(k_{-i}-1) = 2n\, \frac m n (\frac m n-1) = 0$. Therefore, $m/n=1$ is an upper bound for the phase transition point, reproducing the results of Boufkhad et al.~\cite{regular-SAT}.
Notice that, in this case, the conditions of Theorem~\ref{thm-criterion2} are not fulfilled: $k_x$ and $k_{\neg x}$ are not independent random variables. If we consider the proof of this Theorem, since in a random regular formula $k_x=k_{\neg x}$, if this formula contains a clause $x\vee y$, we only need to require that $\E[K_x\mid x\vee y]=2$ in order to ensure that there is another clause containing $\neg x$. With this new criterion, and reproducing the proof of Theorem~\ref{thm-criterion2}, we obtain that the threshold in a regular random formula is $\frac{\E[K^2]}{\E[K]} = 2$.

\subsection{Scale-free 2-SAT Formulas}\label{subsec-2SAT}

Recently, Friedrich et al.~\cite{tobiasAAAI17} have proved that scale-free random 2-SAT formulas with exponent $\delta>3$ and clause/variable ratio $m/n<\frac{(\delta-1)(\delta-3)}{(\delta-2)^2}$ are satisfiable with probability $1-o(1)$.\footnote{In their paper, they write $\beta$ instead of $\delta$, but we prefer to use $\beta$ with the same meaning as in~\cite{IJCAI09}.}  This gives a lower bound for a possible phase transition point, in terms of $\delta$. They conjecture that this bound is tight and that this phase transition exists. Replacing $\delta=1/\beta+1$ (according to Theorem~\ref{thm-exponents-relation}) in this inequality, we get:

Scale-free random 2-SAT formulas with exponent $\beta<1/2$ and clause/variable ratio $m/n<\frac{1-2\beta}{(1-\beta)^2}$ are satisfiable with probability $1-o(1)$.

In the first statement of the following theorem, we prove that when the clause/variable ratio exceeds this value, formulas are almost surely unsatisfiable.

\begin{theorem}\label{thm-scale-free-2-SAT}
(1) Scale-free random 2-SAT formulas with exponent $\beta<1/2$ and clause/variable ratio 
\[
m/n>\frac{1-2\beta}{(1-\beta)^2}
\]are unsatisfiable with probability $1-o(1)$.

\noindent (2) Scale-free random 2-SAT formulas over $n$ variables, exponent $\beta=1/2$ and more that 
\[
4\,n\log^{-1}n + \mathcal{O}(n^{1/2}\log^{-1} n)
\]
distinct clauses, or exponent $1/2 < \beta < 1$, and more than
  \[
  \frac{1}{(1-\beta)^2\zeta(2\beta)}n^{2(1-\beta)}  +\mathcal{O}(n^{1-\beta})
  \]
  distinct clauses, are unsatisfiable with probability $1-o(1)$.
\end{theorem}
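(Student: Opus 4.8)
The plan is to derive all three bounds from the unsatisfiability criterion of Theorem~\ref{thm-criterion2}: a random $2$-CNF in which each variable $x_i$ is selected with probability $P_i=o(1)$ and negated with probability $1/2$, with $m=\mathcal{O}(n)$ (distinct) clauses, is unsatisfiable with probability $1-o(1)$ once $\E[K^2]/\E[K]>3$, where $K$ is the number of occurrences of a uniformly random variable. For the scale-free model Lemma~\ref{lem-surname} lets us take $P_i\approx i^{-\beta}/H_{n,\beta}$ with $H_{n,\beta}=\sum_{j=1}^n j^{-\beta}$, and equations~(\ref{eq-euler})--(\ref{eq-euler2}) evaluate the generalized harmonic sums; since $0\le\beta<1$ we have $H_{n,\beta}\approx\frac{1}{1-\beta}n^{1-\beta}$, so $\E[K_i]=2m\,P_i\approx\frac{2m(1-\beta)}{n^{1-\beta}}\,i^{-\beta}$ (cf.\ Lemma~\ref{lem-expectedK}), with $\sum_{i=1}^n\E[K_i]=2m$ exactly, giving $\E[K]=2m/n$ at once. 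For the second moment, $K_i$ is a sum of $m$ independent Bernoulli variables, one per clause, each of parameter $\approx 2P_i=o(1)$, so $\mathrm{Var}(K_i)\approx\E[K_i]$ and hence $\E[K^2]\approx\E[K]+\frac1n\sum_{i=1}^n\E[K_i]^2$; it is precisely this fluctuation term that turns the criterion into $\frac1n\sum_{i=1}^n\E[K_i]^2>2\,\E[K]$ (at $\beta=0$ this recovers the Chv\'atal--Reed threshold $m/n>1$~\cite{mickgetssome}).

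It then remains to estimate $\frac1n\sum_{i=1}^n\E[K_i]^2\approx\frac{4m^2(1-\beta)^2}{n^{3-2\beta}}\sum_{i=1}^n i^{-2\beta}$ and to split on $\beta$. For $\beta<1/2$, equation~(\ref{eq-euler}) gives $\sum_{i=1}^n i^{-2\beta}\approx\frac{1}{1-2\beta}n^{1-2\beta}$, and the criterion $\E[K^2]/\E[K]>3$ becomes $m/n>\frac{1-2\beta}{(1-\beta)^2}$; combined with the matching satisfiability bound of Friedrich et al.~\cite{tobiasAAAI17} this pins the SAT--UNSAT threshold of scale-free random $2$-SAT at $m/n=\frac{1-2\beta}{(1-\beta)^2}$, which is statement~(1). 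For $\beta=1/2$, equation~(\ref{eq-euler2}) gives $\sum_{i=1}^n i^{-1}\approx\log n$, so the criterion becomes $1+\frac{m\log n}{2n}>3$, i.e.\ $m>4n/\log n$; and for $1/2<\beta<1$ the sum $\sum_{i=1}^n i^{-2\beta}$ converges to $\zeta(2\beta)$, so the criterion becomes $1+\frac{2m(1-\beta)^2\zeta(2\beta)}{n^{2(1-\beta)}}>3$, i.e.\ $m>\frac{n^{2(1-\beta)}}{(1-\beta)^2\zeta(2\beta)}$. The lower-order error terms of statement~(2) come from keeping the next terms of the Euler--Maclaurin expansions~(\ref{eq-euler})--(\ref{eq-euler2}) (above all the $\zeta(\beta)$ correction to $H_{n,\beta}$, which propagates to $\E[K_i]^2$) and from the distinct-clause correction discussed next.

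Two things need care. First, Theorem~\ref{thm-criterion2} concerns the implication graph, whose edges are the \emph{distinct} clauses; when $\beta<1/2$ the discussion preceding the theorem shows that $m$ samples yield $(1-o(1))m$ distinct clauses, so there is nothing to fix, but when $\beta\ge1/2$ the most probable clauses have probability $\omega(n^{-1})$, a linear number of samples would collide on a positive fraction, and this is why statement~(2) is phrased in terms of the number of \emph{distinct} clauses; one must check that conditioning on distinctness perturbs the $P_i$, hence the moments, only within the claimed error terms. Second, and this is the main obstacle, statement~(2) lives in a sublinear regime --- for $\beta>1/2$ one has $m=\Theta(n^{2(1-\beta)})$, and only a sublinear set of variables (roughly the first $n^{(1-\beta)/\beta}$ of them) has expected degree $\Omega(1)$, the rest being essentially absent from $F$. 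One must therefore reverify that the Molloy--Reed / Cohen et al.\ style analysis behind Theorem~\ref{thm-criterion2} survives on this ``core'' of heavy variables: that the exposure process of Algorithm~\ref{algorithm-exposition} started from a heavy literal has drift positive exactly when $\E[K^2]/\E[K]>3$, that the neglected steps (the analogue of Molloy and Reed's case~A) remain negligible until a constant fraction of the core has been exposed, and that positive drift still forces a giant bicycle --- hence unsatisfiability --- with probability $1-o(1)$.
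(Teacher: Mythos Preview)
Your proposal is correct and follows essentially the same route as the paper: apply Theorem~\ref{thm-criterion2}, compute $\E[K]=2m/n$ and $\E[K^2]\approx\frac{4m^2}{n}\frac{\sum_x x^{-2\beta}}{(\sum_i i^{-\beta})^2}+\frac{2m}{n}$ via the Poisson approximation, impose $\E[K^2]/\E[K]=3$ to obtain $m\approx\big(\sum_i i^{-\beta}\big)^2/\sum_x x^{-2\beta}$, and then case-split on $\beta$ using~(\ref{eq-euler})--(\ref{eq-euler2}); the distinct-clause caveat for $\beta\ge1/2$ is handled exactly as you say. Your second caveat about the sublinear regime is a scruple the paper does not raise: it simply invokes Theorem~\ref{thm-criterion2}, whose hypothesis $m=\mathcal{O}(n)$ is formally satisfied when $m=\Theta(n^{2(1-\beta)})$, and leaves the robustness of the underlying Molloy--Reed argument implicit.
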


\begin{proof}
  In the case of scale-free formulas we cannot start the percolation process from the complete formula, since the uniform-random deletion of clauses do not give rise to scale-free formulas. 
  Therefore, we can simply impose the criterion to the original formula.  We will do all the computations using the number of occurrences of variables $K_x$, instead of the number of occurrences of the literal $k_x$, and applying Theorem~\ref{thm-criterion2}. 
  
  Since $\beta<1$, by Lemma~\ref{lem-surname}, repetitions of variables in clauses may be neglected, and the probability that a particular literal in the formula corresponds to the variable $x$ is given by $P(x)\approx\frac{x^{-\beta}}{\sum_{i=1}^n i^{-\beta}}$. Since the election of every variable for every possible literal of the formula is independent, the number of occurrences of $x$ follows a binomial distribution
  \[
  P(K_x\!=\!K) \approx {2m \choose K}\!\!\left(\frac{x^{-\beta}}{\sum_{i=1}^n i^{-\beta}}\right)^K\!\!\!\left(1-\frac{x^{-\beta}}{\sum_{i=1}^n i^{-\beta}}\right)^{2m-K}
  \]
  In the limit $m\to \infty$ the distribution approaches a Poisson distribution where
  \[
  \begin{array}{l}
    \displaystyle \E[K_x]\approx\frac{x^{-\beta}}{\sum_{i=1}^n i^{-\beta}}\,2m\\
    \displaystyle \E[K_x^2]\approx\left(\frac{x^{-\beta}}{\sum_{i=1}^n i^{-\beta}}\,2m\right)^2+\frac{x^{-\beta}}{\sum_{i=1}^n i^{-\beta}}\,2m
  \end{array}
  \]
Recall that in scale-free formulas $K_x$ follows a distinct probability distribution for every variable $x$, therefore we have to average over all variables
  \[
  \begin{array}{ll}
    \E[K] & \displaystyle =\frac 1 n\,\sum_{x=1}^n \E[K_x] \approx 
    \frac 1 n\,\sum_{x=1}^n\frac{x^{-\beta}}{\sum_{i=1}^n i^{-\beta}}\,2m = \frac{2m}{n}
    \\[7mm]
    \E[K^2] & \displaystyle = \frac 1 n\,\sum_{x=1}^n \E[K_x^2] \displaystyle \approx \frac 1 n\,\sum_{x=1}^n\left[\left(\frac{x^{-\beta}}{\sum_{i=1}^n i^{-\beta}}\,2m\right)^2+\frac{x^{-\beta}}{\sum_{i=1}^n i^{-\beta}}\,2m\right]\\[5mm]
    &\displaystyle \approx \frac{4m^2}{n}\ \frac
    {\sum_{x=1}^n x^{-2\beta}}
    {\left[\sum_{i=1} i^{-\beta}\right]^2}+\frac{2m}{n}
  \end{array}
  \]
  
  Imposing the criterion $\E[K^2]/\E[K]=3$ we get
  \[
  m\approx\frac
    {\left[\sum_{i=1}^n i^{-\beta}\right]^2}
    {\sum_{x=1}^n x^{-2\beta}} 
 \]
Applying equations~(\ref{eq-euler}) and~(\ref{eq-euler2}), we get
  \[
  m=\left\{
  \begin{array}{l@{\ \ }l}
    \displaystyle\frac{1-2\beta}{(1-\beta)^2}n  +\mathcal{O}(n^\beta)
    & \mbox{if $\beta<1/2$}\\[4mm]
    \displaystyle 4\,n\log^{-1} n+\mathcal{O}(n^{1/2}\log^{-1} n)
    & \mbox{if $\beta = 1/2$}\\[4mm]
    \displaystyle\frac{1}{(1-\beta)^2\zeta(2\beta)}n^{2(1-\beta)}  +\mathcal{O}(n^{1-\beta})
    & \mbox{if $1/2<\beta<1$}\\[4mm]
    \displaystyle\frac{1}{\zeta(2)}\log^2 n + \mathcal{O}(n^{-1}\log n)
    & \mbox{if $\beta = 1$}\\[4mm]
    \displaystyle\frac{\zeta^2(\beta)}{\zeta(2\beta)}+\mathcal{O}(n^{1-\beta})
    & \mbox{if $1<\beta$}
  \end{array}
  \right.
  \]
  The last two cases are meaningless, since we have assumed that $\beta<1$ in other parts of the proof. The first three possibilities prove the two statements of the theorem. In the second and third case, since we cannot prove that the fraction of repeated clauses is meaningless, we obtain a bound on the number of distinct clauses.
\end{proof}

\begin{figure}[t]
  \centering
  \includegraphics[width=\columnwidth]{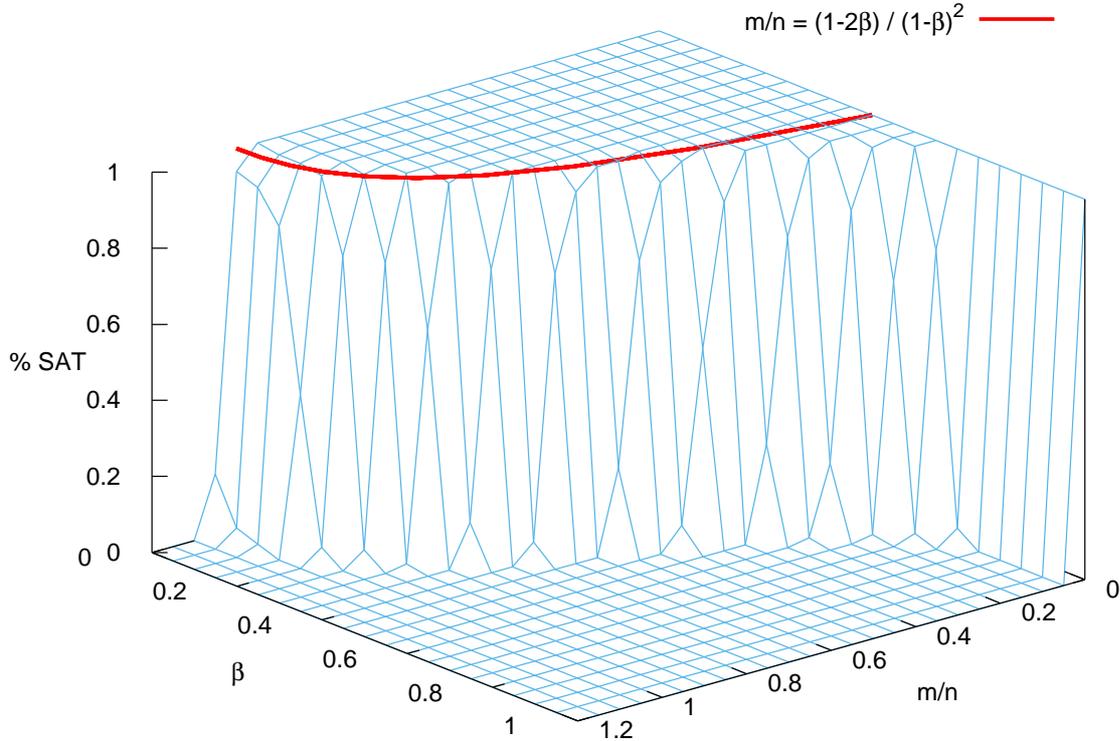}
  \vspace{-10mm}
  
\caption{Fraction of satisfiable formulas as a function of parameter $\beta$ and fraction of clause/variables $m/n$. The number of variables is $n= 10^5$ and the fraction is approximated repeating the experiment for $10$ formulas at every point. We also draw the theoretical threshold $m/n=\frac{1-2\beta}{(1-\beta)^2}$.}
\label{fig-satunsat}
\end{figure}

From~\citet{tobiasAAAI17} and Theorem~\ref{thm-scale-free-2-SAT}
we can conclude:

\begin{corollary}
  Scale-free 2-SAT formulas over $n$ variables and exponent $\beta<1/2$ have a SAT-UNSAT phase transition threshold when the variable/clauses ratio~is
  \[
  m/n = \frac{1-2\beta}{(1-\beta)^2}
  \]
\end{corollary}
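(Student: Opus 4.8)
The plan is to assemble the matching lower and upper bounds that are already available and to check that they refer to the same random model. First I would pin down what a \emph{SAT-UNSAT phase transition threshold} means here: a value $r^\star = r^\star(\beta)$ such that a scale-free random $2$-SAT formula over $n$ variables with $m = r\,n$ clauses is satisfiable with probability $1-o(1)$ whenever $r < r^\star$, and unsatisfiable with probability $1-o(1)$ whenever $r > r^\star$. The claim to establish is that $r^\star(\beta) = \frac{1-2\beta}{(1-\beta)^2}$ for every $\beta < 1/2$.

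For the SAT side I would invoke the theorem of Friedrich et al.~\cite{tobiasAAAI17}: scale-free random $2$-SAT formulas with power-law exponent $\delta>3$ are satisfiable with probability $1-o(1)$ when $m/n < \frac{(\delta-1)(\delta-3)}{(\delta-2)^2}$. To transfer this to our parametrization I would substitute $\delta = 1/\beta+1$, the exponent relation provided by Theorem~\ref{thm-exponents-relation}; then $\delta>3$ becomes $\beta<1/2$ and the bound becomes $m/n < \frac{1-2\beta}{(1-\beta)^2}$, exactly as recorded just before Theorem~\ref{thm-scale-free-2-SAT}. For the UNSAT side I would invoke part~(1) of Theorem~\ref{thm-scale-free-2-SAT}: for $\beta<1/2$ and $m/n > \frac{1-2\beta}{(1-\beta)^2}$ the formula is unsatisfiable with probability $1-o(1)$. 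Since the SAT bound and the UNSAT bound coincide, $r^\star(\beta)$ exists and equals $\frac{1-2\beta}{(1-\beta)^2}$, which is the statement.

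I expect the only real work to be two compatibility checks rather than any new probabilistic estimate. First, one must confirm that the scale-free model of~\cite{tobiasAAAI17} agrees with the model defined in Section~\ref{sec-formulas} (at least up to the $o(1)$ accuracy used throughout), so that the exponent relation $\delta = 1/\beta+1$ of Theorem~\ref{thm-exponents-relation} genuinely bridges the two statements. Second, one must observe that the difference between ``$m$ clauses'' and ``$m$ distinct clauses'' is immaterial in this regime: by the computation preceding Theorem~\ref{thm-scale-free-2-SAT}, when $\beta<1/2$ a freshly sampled $2$-clause duplicates an earlier one with probability $\mathcal{O}(n^{2\beta-1}) = o(1)$, so a formula with $m = r\,n$ sampled clauses has $(1-o(1))\,m$ distinct clauses and the ratio is preserved in the limit. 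With these observations the corollary is immediate; the main obstacle, such as it is, is purely this model bookkeeping.
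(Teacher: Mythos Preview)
Your proposal is correct and follows exactly the paper's approach: the corollary is stated immediately after Theorem~\ref{thm-scale-free-2-SAT} with no separate proof, prefaced only by ``From~\cite{tobiasAAAI17} and Theorem~\ref{thm-scale-free-2-SAT} we can conclude,'' together with the remark just before Theorem~\ref{thm-scale-free-2-SAT} that substituting $\delta=1/\beta+1$ into the Friedrich et al.\ bound yields the matching SAT side. Your additional bookkeeping (model compatibility, distinct vs.\ repeated clauses) is more careful than the paper itself, but not at odds with it.
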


We have experimentally analyzed the fraction of satisfiable random scale-free 2-SAT formulas depending on the parameter $\beta$ and fraction of clause/variable $m/n$. The results are plotted in Figure~\ref{fig-satunsat}, for formulas with $n=10^5$ variables. We observe that the phase transition predicted by Theorem~\ref{thm-scale-free-2-SAT} is quite precise, except when $\beta \gtrsim 1/2$. In the limit $n\to \infty$, the fraction of satisfiable formulas with $n$ variables and $c\,n$ clauses tends to zero when $c>0$. However, as the number of clauses needed to make the formula unsatisfiable grows as $n^{2(1-\beta)}$, when $\beta$ is close to $1/2$ the confluence is very slow.

\begin{figure}[t]
  \centering
  \includegraphics[width=\columnwidth]{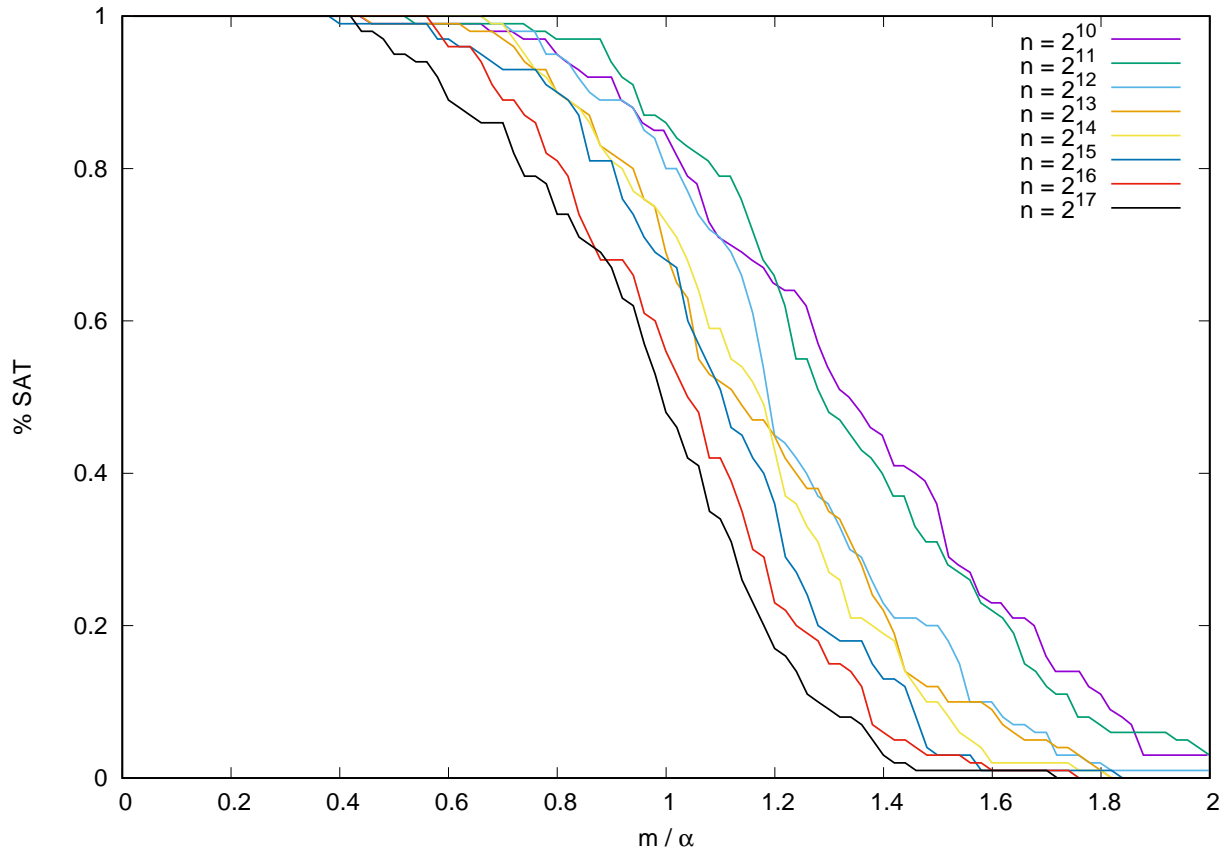}
\caption{Fraction of satisfiable formulas as a function of
  $m/\alpha$, where $\alpha = \frac{1}{(1-\beta)^2\zeta(2\beta)}n^{2(1-\beta)}$ for $\beta=0.7$, and distinct values of $n$ between $2^{10}$ and $2^{17}$. Every point is computed repeating the experiment for $100$ formulas, and checking how many of them are satisfiable.}
  \label{fig-varn}
\end{figure}

In order to test experimentally the second statement of Theorem~\ref{thm-scale-free-2-SAT}, we have analyzed the fraction of satisfiable formulas with respect to $m/\alpha$, where $\alpha = \frac{1}{(1-\beta)^2\zeta(2\beta)}n^{2(1-\beta)}$. In Figure~\ref{fig-varn}, we show the results for $\beta=0.7$. We observe that, for distinct values of $n$, the transition between SAT and UNSAT is around $\alpha$. However, for increasing values of $n$ the transition does not seem to become more abrupt.

\section{Unsatisfiability by Small Cores}\label{sec-small-cores}

In the proof of Theorem~\ref{thm-scale-free-2-SAT} we have already seen that, when $\beta>1/2$ the number of clauses needed to make a 2-SAT formula unsatisfiable is sub-linear. Therefore, the phase transition factor --understood as a constant $c$ such that, on the limit $n\to \infty$, formulas with less that $c\,n$ clauses are satisfiable and those with more than $c\,n$ clauses are unsatisfiable-- is zero. In this section, we will prove that, when $\beta$ exceeds a certain bound, scale-free formulas become unsatisfiable due to a small subset of clauses containing variables with small indexes. Moreover, this result holds for clauses of any size.

\begin{theorem}\label{thm-exponent}
  A random scale-free formula over $n$ variables, exponent $0\leq \beta<1$ and
  $\omega(n^{(1-\beta)k})$ clauses of size $k$ is unsatisfiable with
  probability $1-o(1)$.
\end{theorem}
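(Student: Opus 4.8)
The plan is to exhibit, inside the random formula, an unsatisfiable sub-formula (a \emph{core}) of constant size built entirely from the variables with the smallest indices, and to show that such a core is present with probability $1-o(1)$ as soon as $m=\omega(n^{(1-\beta)k})$.

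The core I have in mind is the complete $k$-CNF on the first $k$ variables: the $2^k$ clauses $\ell_1\vee\cdots\vee\ell_k$ with $\ell_i\in\{x_i,\neg x_i\}$. This set is unsatisfiable, since any assignment $a$ falsifies the clause whose $i$-th literal is $x_i$ when $a(x_i)=0$ and $\neg x_i$ when $a(x_i)=1$. Hence it suffices to prove that, with probability $1-o(1)$, every one of these $2^k$ clauses occurs among the $m$ independently sampled clauses of the formula. (More generally one could take, for any constant $t\ge k$, all $2^k\binom{t}{k}$ clauses over $\{x_1,\dots,x_t\}$, which is again unsatisfiable; $t=k$ gives the smallest core.)

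I would then fix one such clause $C$. By the clause-probability inequality stated right after the definition of the model, $P(C)\ge \frac{k!\,\prod_{j=1}^{k} j^{-\beta}}{\bigl(2\sum_{i=1}^{n} i^{-\beta}\bigr)^{k}}=\frac{k!\,(k!)^{-\beta}}{2^{k}\bigl(\sum_{i=1}^{n} i^{-\beta}\bigr)^{k}}$, and by the Euler--Maclaurin estimate~(\ref{eq-euler}), $\sum_{i=1}^{n} i^{-\beta}\approx n^{1-\beta}/(1-\beta)$ for $\beta<1$, so $P(C)=\Omega\bigl(n^{-(1-\beta)k}\bigr)$ with a constant depending only on $k$ and $\beta$ (and in fact the same for every sign pattern). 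Since the $m$ clauses are drawn independently, the probability that $C$ never occurs is $(1-P(C))^{m}\le e^{-m\,P(C)}$, and $m=\omega(n^{(1-\beta)k})$ gives $m\,P(C)=\omega(1)$, so this probability is $o(1)$. A union bound over the (constantly many) $2^{k}$ clauses of the core then shows that all of them appear with probability $1-o(1)$, and on that event the formula is unsatisfiable.

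There is essentially no hard step here: the real content is the observation that a \emph{bounded} core over the heaviest variables already does the job, after which the argument is a one-line coupon-collector estimate. The only points to be careful about are (i) checking that the exponent in $\omega(n^{(1-\beta)k})$ matches $1/P(C)$, which it does through the harmonic-number asymptotics~(\ref{eq-euler}) --- this is also where $\beta<1$ is used (and, via Lemma~\ref{lem-surname}, where one would control the rejection of repeated-variable clauses if one preferred an equality rather than the lower bound) --- and (ii) if one wanted to let $k$ grow with $n$, the union bound over $2^{k}$ clauses would require the stronger $m\,P(C)=\omega(k)$, so the statement is cleanest for constant $k$. Note finally that the core has size $2^{k}$, independent of $n$: when $\beta>1-1/k$, so that $(1-\beta)k<1$ and $\omega(n^{(1-\beta)k})$ is sublinear, unsatisfiability of almost every formula is already forced by this constant-size core --- the ``small cores'' phenomenon announced in the introduction.
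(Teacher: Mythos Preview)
Your proposal is correct and follows essentially the same approach as the paper: both arguments exhibit the complete $k$-CNF on $\{x_1,\dots,x_k\}$ as a constant-size unsatisfiable core, estimate $P(C)=\Theta(n^{-(1-\beta)k})$ via the harmonic-number asymptotics, show each of the $2^k$ clauses appears with probability $1-o(1)$ when $m=\omega(n^{(1-\beta)k})$, and finish with a union bound (explicit in your write-up, implicit in the paper's ``the same applies for other $2^k$ clauses''). Your use of the inequality $P(C)\ge\cdots$ and the bound $(1-P(C))^m\le e^{-mP(C)}$ is slightly crisper than the paper's explicit limit computation, and your remark that constant $k$ is needed for the union bound matches the paper's ``if $k=\mathcal{O}(1)$'' caveat.
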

\begin{proof}
  The probability of a clause only containing the smallest $k$ variables is
  \[
  P(x_1\vee\dots\vee x_k) \approx \frac{k!\,1^{-\beta}\cdots k^{-\beta}}{\left(2\, \sum_{i=1}^n i^{-\beta}\right)^k}
  \]
  This inequality would be an equality, if we allow tautologies and simplifiable clauses (i.e. repeated variables) in formulas.

  Using (\ref{eq-euler}) we get
  \[
  P(x_1\vee\dots\vee x_k)  \displaystyle \approx
  \frac{k!\,1^{-\beta}\cdots k^{-\beta}}{\left( 2\,\sum_{i=1}^n i^{-\beta}\right)^k} =
  \frac{(k!)^{1-\beta}}
  {2^k\,\left(\zeta(\beta)+\frac{n^{1-\beta}}{1-\beta}+\mathcal{O}(n^{-\beta})\right)^k}
  \]

In the limit $n\to \infty$, the probability of generating the clause $x_1\vee\dots\vee x_k$ after generating $n^{(1-\beta)k}$ independent clauses is
\[
1 - \left( 1 - \frac{(k!)^{1-\beta}}{2^k\, \left(\frac{n^{1-\beta}}{1-\beta}\right)^k}\right)^{n^{(1-\beta)k}} =
1 - \left( 1 - \frac{\left(\frac{1-\beta}{2}\right)^k\,(k!)^{1-\beta}}{n^{(1-\beta)k}}\right)^{n^{(1-\beta)k}} \approx 1-e^{-\left(\frac{1-\beta}{2}\right)^k(k!)^{1-\beta}}
\]
Therefore, the probability of generating the clause $x_1\vee\dots\vee x_k$ is $1-o(1)$ when the number of clauses is $m=\omega(n^{(1-\beta)k})$. The same applies for other $2^k$ clauses  with distinct signs, and, if $k=\mathcal{O}(1)$, to a refutation of the formula only using these set of clauses.
\end{proof}

As in classical random formulas, the expected number of truth assignments that satisfy a scale-free random formula is $2^n(1-2^{-k})^m$. This imposes a linear upper bound on the number of clauses of satisfiable scale-free formulas, i.e. a random scale-free formula with $m=c\,n$ clauses of size $k$ over $n$ variables such that $c>2^k\,\log2$ is unsatisfiable with probability $1-o(1)$. Therefore, the bound in Theorem~\ref{thm-exponent} only \emph{improves} this other linear bound when $(1-\beta)k <1$, hence when $\beta>1-1/k$.

Figure~\ref{fig-transition} shows an experimental estimation of how many clauses are needed to make unsatisfiable $50\%$ of the random formulas generated with distinct values of $\beta$ and $k=3$, as a function of the number of variables. 

\begin{figure}[ht]
\begin{center}
\includegraphics{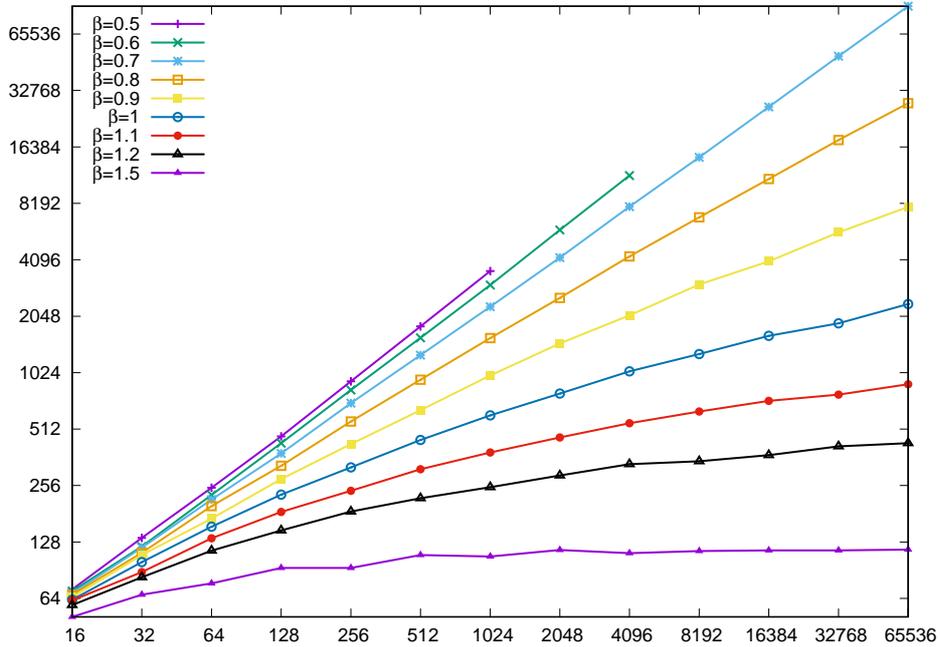}
\end{center}
\caption{Estimation of the number of clauses that are needed to make unsatisfiable $50\%$ of the formulas generated for distinct values of $\beta$ and $k=3$ as a function of the number of variables.}\label{fig-transition}
\end{figure}

Theorem~\ref{thm-exponent} predicts that the number of clauses in a satisfiable scale-free 2-SAT formula cannot grow faster that $\mathcal{O}(n^{2(1-\beta)})$, due to the emergence of small cores. When $1/2<\beta<1$, the second statement of Theorem~\ref{thm-scale-free-2-SAT}, predicts exactly the same exponent $2(1-\beta)$ for the emergence of a giant bicycle. This suggest that, in this range of $\beta$, the probability of existence of a small and a giant unsatisfiable core of clauses is similar. However, experimental results (see Figure~\ref{fig-varn}) suggest that the SAT-UNSAT transition is quite smooth, like in classical 1-SAT. This suggest that small cores are, in fact, more prominent. Another argument in this direction is as follows:

Let $C(V)$ be the subset of clauses only containing variables of the subset $V$ of variables. The greater $|C(V)|/|V|$ is, the higher is the probability to have an unsatisfiable core inside $C(V)$. In the case of scale-free random k-SAT formulas, let $C_r$ be the set of clauses only containing variables $\{1,\dots,r\}$. We can estimate
\[
E\left[\frac{|C_r|}{r}\right] = \frac mr\left(\frac{\sum_{i=1}^r i^{-\beta}}{\sum_{i=1}^n i^{-\beta}}\right)^k\approx
\frac mr\left(\frac{r^{1-\beta}-1}{n^{1-\beta}-1}\right)^k
\]
For $(1-\beta)k \geq 1$, i.e. $\beta < 1-1/k$, the maximum of this function is $r=\infty$.  For $(1-\beta)k < 1$, i.e. $\beta > 1-1/k$, the maximum is finite:
\[
r= (1\!-\!(1\!-\!\beta)k)^{-1/(1-\beta)}
\]
Notice that $(1-\beta)k$ is the exponent predicted by Theorem~\ref{thm-exponent}, and that for 2-SAT, $1-1/k=1/2$. Therefore, we get another proof that at $\beta=1-1/k$ we get a change in the behavior of scale-free random k-SAT formulas. When $n\to\infty$, for $\beta \leq 1-1/k$ the most probable is to get a very large core that involves a fraction of the whole set of clauses. For $\beta > 1-1/k$ the most probable is to get a small core only involving a finite set of clauses and variables $\{1,\dots,(1\!-\!(1\!-\!\beta)k)^{-1/(1-\beta)}\}$.

\section{Conclusions}

We have proposed a new model of generation of random SAT formulas that mimic better the properties observed in \emph{real world} formulas. In particular, the number of occurrences of variables follow a power-law distribution, as observed in the industrial SAT instances used in competitions. This is obtained by assigning a distinct probability $P(i)\sim i^{-\beta}$ to every variable $i\in\{1,\dots,n\}$, where $\beta$ is a parameter. This model generalizes (classical) random SAT formulas by taking $\beta=0$.

We prove the existence of a SAT-UNSAT phase transition for 2-CNF formulas. This result is obtained using a novel technique based on percolation techniques. For arbitrary k-CNF formulas, we prove that formulas with more than $\omega(n^{(1-\beta)k})$ clauses are unsatisfiable with probability $1-o(1)$. More precisely,  that when $\beta>1-1/k$ formulas are unsatisfiable due to a small set of clauses that only involve most frequent variables.

\bibliographystyle{elsart-harv.bst}
\bibliography{aij5}

\end{document}